\pdfoutput=1 
\documentclass[conference]{IEEEtran}
\IEEEoverridecommandlockouts
\usepackage{cite}
\usepackage{graphicx}
\usepackage{balance}  % for  \balance command ON LAST PAGE  (only there!)
\usepackage{color}
\usepackage{amsfonts,amssymb,amsmath,bm}
\usepackage{makecell}
\usepackage{pifont}
\usepackage{stfloats}
\usepackage{subfigure}
\usepackage{multirow}
\usepackage[table,xcdraw]{xcolor}
\usepackage{graphicx}
\usepackage[ruled,vlined]{algorithm2e}
\usepackage{verbatim}
\usepackage{adjustbox}
\usepackage{diagbox}
\usepackage{makecell}
\usepackage{booktabs}
\usepackage{url}
\usepackage[normal, footnotesize]{caption}

\usepackage{multirow}
\usepackage[normalem]{ulem}
\usepackage{ulem}
\usepackage{amsthm}
\useunder{\uline}{\ul}{}

\newtheorem{example}{Example}
\newtheorem{definition}{Definition}
 
\newtheorem{lemma}{Lemma} 

\makeatletter
  \newcommand\figcaption{\def\@captype{figure}\caption}
  \newcommand\tabcaption{\def\@captype{table}\caption}
\makeatother

\begin{document}

\title{Accelerating Biclique Counting on GPU}

\author{
Linshan Qiu, Zhonggen Li, Xiangyu Ke, Lu Chen, Yunjun Gao\\

\emph{Zhejiang University, Hangzhou, China} \\
%\\
%\normalsize $^{\sharp}$ \emph{School of Software Technology, Zhejiang University, Ningbo, China}\\
% \normalsize $^{\S}$\emph{Huawei, Chengdu, China}\\

\emph{\{lsqiu, zgli,xiangyu.ke,luchen, gaoyj\}@zju.edu.cn 
}}

\maketitle

\begin{abstract}
Counting $(p,q)$-bicliques in bipartite graphs poses a foundational challenge with broad applications, from densest subgraph discovery in algorithmic research to personalized content recommendation in practical scenarios. 
Despite its significance, current leading $(p,q)$-biclique counting algorithms fall short, particularly when faced with larger graph sizes and clique scales. 
Fortunately, the problem's inherent structure, allowing for the independent counting of each biclique starting from every vertex, combined with a substantial set intersections, makes it highly amenable to parallelization.
Recent successes in GPU-accelerated algorithms across various domains motivate our exploration into harnessing the parallelism power of GPUs to efficiently address the $(p,q)$-biclique counting challenge.  
%\textcolor{red}{[XY: one sentence about: the problem could be parallelize, hence suitable to solve via GPUs.]}

We introduce \textsf{GBC} (\underline{\textsf{G}}PU-based \underline{\textsf{B}}iclique \underline{\textsf{C}}ounting), a novel approach designed to enable efficient and scalable $(p,q)$-biclique counting on GPUs. 
%To overcome low thread utilization arising from DFS exploration in counting $(p,q)$-bicliques, 
%First, we present an innovative hybrid DFS-BFS exploration strategy that enhances thread utilization while effectively managing memory constraints. 
%Subsequently, a composite load balancing strategy, integrating static and dynamic workload allocation, is implemented to equitably distribute the workload among threads.
To address major bottleneck arising from redundant comparisons in set intersections (occupying an average of 90\% of the runtime), we introduce a novel data structure that hashes adjacency lists into truncated bitmaps to enable efficient set intersection on GPUs via bit-wise AND operations.
%while preventing redundant comparisons.
Our innovative hybrid DFS-BFS exploration strategy further enhances thread utilization and effectively manages memory constraints. A composite load balancing strategy, integrating pre-runtime and runtime workload allocation, ensures equitable distribution among threads.
Additionally, we employ vertex reordering and graph partitioning strategies for improved compactness and scalability. 
Experimental evaluations on eight real-life and two synthetic datasets demonstrate that \textsf{GBC} outperforms state-of-the-art algorithms by a substantial margin. In particular, \textsf{GBC} achieves an average speedup of 497.8$\times$, with the largest instance achieving a remarkable 1217.7$\times$ speedup when $p=q=8$. 
\end{abstract}

\begin{IEEEkeywords}
biclique counting, bipartite graph, GPU
\end{IEEEkeywords}

\section{Introduction}
\label{introduction}

%Bipartite graphs are extensively employed to model the relationships between two distinct sets of entities, such as social networks~\cite{wang2021efficient}, recommender systems~\cite{GraphJet}, and e-commerce networks~\cite{li2020hierarchi}, to name a few.
%Figure~\ref{fig:example and time breakdown}(a) exemplifies a case of a bipartite graph $G(U, V, E)$, where $U$ and $V$ represent two disjoint vertex sets, and $E$ is the edge set, where edges are restricted to connect vertices belonging to distinct sets. Let us consider a specific instance in the context of a recommender system, the vertices of the bipartite graph represent users and items (e.g., movies, products), while the edges signify interactions, like user-item purchases or ratings. 
%Given its prevalence in real-world applications, considerable efforts have been devoted to proposing diverse algorithms catered to various tasks on bipartite graphs, including community search~\cite{wang2021efficient}, cohesive subgraph discovery~\cite{liu2019efficient,wang2020efficient}, butterfly counting~\cite{sanei2018butterfly}, graph learning~\cite{yang2022scalable,li2020hierarchical}, etc. 

Bipartite graphs are pivotal for illustrating connections between two distinct sets of entities, finding practical applications across diverse domains such as social networks~\cite{wang2021efficient}, recommender systems~\cite{GraphJet}, and e-commerce networks~\cite{li2020hierarchi}. 
In this model, one set signifies a specific type of entity (e.g., users), while the other set represents a different type (e.g., items). Edges establish connections from entities in the first set only to those in the second, capturing relationships or interactions between them.
There has been a wide range of explorations over bipartite graphs, such as community search~\cite{wang2021efficient}, cohesive subgraph discovery~\cite{liu2019efficient, wang2020efficient}, butterfly counting~\cite{sanei2018butterfly}, and graph learning~\cite{yang2022scalable, li2020hierarchi}. 
However, within this intricate landscape, a critical challenge emerges—the enumeration of $(p,q)$-bicliques, a combinatorial task with far-reaching implications. A $(p,q)$-biclique represents a complete subgraph with $p$ vertices from one set and $q$ vertices from the other. Notably, the well-known butterfly concept corresponds to the $(2,2)$-biclique~\cite{yang2021p}.
The enumeration of $(p,q)$-bicliques holds crucial importance in both algorithmic research and various applications, {including densest subgraph detection~\cite{mitzenmacher2015scalable}, cohesive subgroup analysis~\cite{borgatti1997network} in bipartite graphs, and optimization of GNN information aggregation~\cite{yang2021p}. 
For instance, the pursuit of the $(p,q)$-biclique densest subgraph in a bipartite graph relies on the concept of $(p, q)$-biclique density~\cite{mitzenmacher2015scalable}, which is defined as the ratio between the number of $(p, q)$-bicliques in a subgraph $S$ and the size of $S$.
}
% For instance, in content recommendation systems, $(p,q)$-bicliques identify user clusters and relevant content groups, aiding personalized recommendations based on the strength of group associations.
%For instance, in content recommendation systems, $(p,q)$-bicliques can pinpoint clusters of users with shared interests and groups of content items tailored to those interests. The quantity of $(p,q)$-bicliques involving a user group serves as an indicator of the strength of association within that group, facilitating more personalized content recommendations.

\begin{figure}[tbp]
    \centering
    \includegraphics[width=0.45\textwidth]{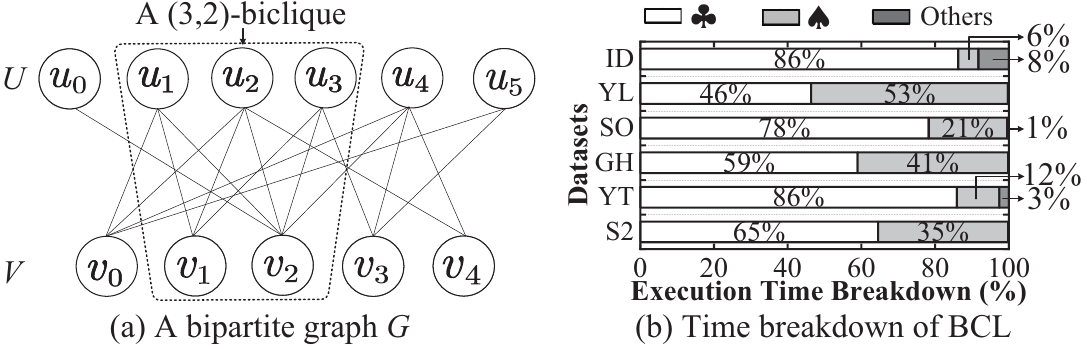}
    \vspace{-2mm}
    \caption{An example of bipartite graph and time breakdown of \textsf{BCL} ($\clubsuit$ and $\spadesuit$ denotes searching for shared 1-hop and 2-hop neighbors, respectively).}
    \label{fig:example and time breakdown}
    \vspace{-8mm}
\end{figure}

%\textcolor{red}{[XY: use a example block to illustrate fig 1]}

%In this paper, we study the problem of ($p,q$)-biclique counting, where a ($p,q$)-biclique is a complete subgraph $B(L, R)$ of a bipartite graph with $|L|=p$, $|R|=q$, where $L\subseteq U$, $R\subseteq V$ and $\forall (u,v)\in L \times R$, $(u,v)\in E$. In Figure~\ref{fig:example and time breakdown}(a), for example, vertex sets $L=\{u_1,u_2\}$, $R=\{v_1,v_2,v_3\}$ and the edges connecting vertices in $L$ and $R$ form a $(2,3)$-biclique. 
%The concept of a $(p, q)$-biclique represents a broader and more inclusive form of biclique-based cohesion compared to other specific counterparts, such as the butterfly, which corresponds to a $(2,2)$-biclique~\cite{yang2021p}.
%Furthermore, the quantity of $(p,q)$-bicliques in a bipartite graph serves as a pivotal foundation for graph analysis and various applications. For example, in content recommendation systems, $(p,q)$-bicliques can be used to identify groups of users who share specific interests and groups of content items that cater to those interests. The number of $(p,q)$-bicliques involving a group of users serves as an indicator of the strength of association among this user group, fostering more personalized content recommendations.

Counting $(p,q)$-bicliques presents a formidable challenge, given its exponential increase concerning $p$ and $q$~\cite{yang2021p}. %This renders it a computationally intensive task~\cite{yang2021p}. 
% {Fundamentally, retrieving $(p,q)$-bicliques is actually a collection of alternative intersections to find common neighbors of the vertices on each side in the partial bicliques. These common neighbors are then added to expand the partial bicliques until the size constraints are met. Therefore, identifying common neighbors (intersection of adjacency lists) contributes to the foundation of the biclique counting algorithm.}
%Especially, retrieving $(p,q)$-bicliques requires iteratively finding the 1-hop and 2-hop\footnote{Those vertices share at least $p$ or $q$ neighbors in the other side.} common neighbors of each vertex in the partial results, i.e., \emph{intersections between the adjacency lists of the vertices under exploration and those in the partial results}.
%{Especially, retrieving $(p,q)$-bicliques requires iteratively finding the common 1-hop neighbors of the vertices that are 2-hop neighbors to each other, i.e., intersections between the adjacency lists of the vertices under exploration and those in the partial results. }
Specifically, retrieving $(p,q)$-bicliques involves iteratively identifying the common 1-hop neighbors for a set of vertices that are mutually 2-hop neighbors\skip\footins\smallskipamount\footnote{A pair of vertices share at least $p$ or $q$ neighbors on the other side.}, requiring intersections between the adjacency lists of the vertices under exploration and those in the partial results.
Yang et al. undertake the pioneering investigation and introduce the leading algorithm in a backtracking manner, namely \textsf{BCL}~\cite{yang2021p}.
%{\{LS: how about not much detail of BCL, while focusing on the general method above.\}}
% \textsf{BCL} computes the qualified ($p,q$)-biclique by means of backtracking enumeration. Specifically, \textsf{BCL} first selects a layer, assuming it's layer $U$, to construct a 2-hop graph $H$\footnote{The vertex set of $H$ is $U$, and edges are established between pairs of vertices in $U$ if they share at least $q$ vertices on the opposing layer $V$.} that optimizes the computation overhead according to the designed cost function. Subsequently, the algorithm recursively searches for (partial) $p$-cliques $L$ in $H$ {by extracting subgraph $H^\prime$ induced by the neighbors of the newly added vertex}, while identifying the shared neighbor sets $S$ among the vertices comprising the ongoing (partial) $p$-clique in $G$. 
%\textsf{BCL} employs backtracking enumeration to compute the qualified $(p,q)$-bicliques. 
% {\textsf{BCL} initially chooses a vertex set (say $U$) that minimizes the overall complexity to construct a graph connecting vertices that share at $q$ 1-hop neighbors (on the opposing layer). Subsequently, it recursively explores sub-cliques (say $p$-cliques) in this constructed graph, while identifying the common neighbors among the vertices of the sub-cliques on the opposing layer.}
However, \textsf{BCL} encounters scalability issues\footnote{On real-world million-scale dataset \textit{FR}, the running time of \textsf{BCL} exceeds 24 hours when $p=q=8$. Please refer to \S~\ref{sec:overallperformance} for more details.} concerning either dataset size or {clique scale, i.e., larger $p$ and $q$}. 
We observe that the inefficiency in identifying shared 1-hop and 2-hop neighbors via intersections is the primary culprit. 
%, becoming the bottleneck of the algorithm.
Figure~\ref{fig:example and time breakdown}(b) visually breaks down the execution time taken by \textsf{BCL} across six selected datasets, with labels omitted for clarity when bar heights fall below 1\%. As depicted, the time devoted to searching for shared 1-hop and 2-hop neighbors peaks at more than 99\%, averaging at 97\%.
Consequently, there is an urgent need to optimize the intersection to enhance the algorithm's efficiency. 

\vspace{-1.5mm}
\begin{example}
    Consider the identification of $(3,2)$-bicliques in Figure~\ref{fig:example and time breakdown}(a), the dotted line circle presents an instance with vertices $\{u_1,u_2,u_3\}$ from layer $U$ and $\{v_1,v_2\}$ from layer $V$. Notably, $\{u_2, u_3, u_4\}$ are also mutual 2-hop neighbors. Specifically, $u_2$ and $u_3$ share 1-hop neighbors $\{v_1, v_2\}$, $u_2$ and $u_4$ have 1-hop neighbors $\{v_0, v_2, v_4\}$, and $u_3$ and $u_4$ possess 1-hop neighbors $\{v_2, v_3\}$. However, despite these mutual 2-hop connections, they only share a common 1-hop neighbor, $v_2$, preventing them from forming a $(3,2)$-biclique. The set of vertices with a mutual 2-hop neighborhood relationship serves as candidates, and the fundamental test involves checking their common neighbors, leading to exponential growth in computational costs concerning the target clique size $(p,q)$.
\end{example}
\vspace{-1.5mm}

As real-world graphs undergo exponential growth, their computational demands have intensified. A recent and notable trend involves the utilization of GPUs to streamline a variety of graph algorithms, including shortest path~\cite{lu2020accelerating}, PageRank~\cite{shi2019realtime}, breadth-first traversal~\cite{liu2015enterprise}, graph pattern mining~\cite{chen2020pangolin,wang2016gunrock}, leading to remarkable performance advancements. 
Compared to CPUs, GPUs stand out with numerous computation cores and high-bandwidth memory, making them ideal for computationally intensive tasks. Encouraged by the recent success in GPU-accelerated graph algorithms, 
%\textcolor{red}{[XY: this is not clear here. the current paragraph is about why GPUs are suitable for solving the problem. I dont see anything about combining CPUs and GPUs]}{parallel acceleration on GPU platform} 
GPU-based parallelism emerges as a promising solution to expedite $(p,q)$-clique counting. 
The independence of counting $(p,q)$-bicliques from each vertex presents significant potential for parallelizing $(p,q)$-biclique counting. Furthermore, implementing the algorithm's predominant procedures through intersections, feasibility proven on GPUs~\cite{xu2022efficient, Hu0L21}, enhances its efficiency.

\textbf{Challenges.} GPUs operate in an SIMT (Single Instruction, Multiple Threads) manner, which is distinct from the typical CPU paradigm. A simple transposition of the existing algorithm to the GPU, as evidenced in \S\ref{sec:evaluation}, does not yield satisfactory performance improvements. Specifically, three primary challenges are hindering the development of a high-performance $(p,q)$-biclique counting algorithm on the GPU.

\textit{\textbf{Challenge \uppercase\expandafter{\romannumeral1}}: How to implement efficient set intersection on GPU?} 
The inefficiency of set intersection in existing GPU-based algorithms arises from redundant comparisons and substantial memory access{~\cite{pandey2021trust}}. Various efforts within the field of triangle counting~\cite{hu2018tricore, Hu0L21, BissonF17} have sought to optimize set intersection on GPUs, with the binary search-based approach{~\cite{Hu0L21}} emerging as a leading efficient technique.
Performing binary searches on adjacency lists involves element-wise comparisons with elements residing in global memory, increasing computational overhead and memory access latency—particularly when dealing with lengthy adjacency lists (larger graphs) and probing deeper search trees (larger clique size).
To address these challenges, we introduce a novel data structure, Hierarchical Truncated Bitmap, which hashes vertices in the adjacency list into 32-bit truncated bitmaps (integers) and aggregates the offsets of these bitmaps as a range index. This enhancement facilitates set intersection through bit-wise AND operations, mitigating redundant comparisons (\S~\ref{sec:bitmap}).
Furthermore, we propose a vertex reordering method named \textsf{Border}, aiming to maximize the storage of multiple vertices within a single bitmap, thus effectively compressing the data and reducing data retrieval overhead (\S~\ref{sec:vertexreorder}).

{\textit{\textbf{Challenge \uppercase\expandafter{\romannumeral2}}: How to adapt the biclique counting algorithm to align with GPU architecture?}} 
The GPU, with its abundance of available threads numbering in the thousands, demands optimal utilization to fully harness its computational prowess.
However, several issues contribute to low GPU thread utilization. 
Firstly, as exploration depth increases, the elements pending examination in the partial result sets progressively diminish. Using a fixed-size thread group leads to a significant portion of threads being idle. {Additionally, traversing vertices through backtracking exploration one at a time results in thread and bandwidth wastage.}  
To mitigate these concerns, we diverse a hybrid DFS-BFS search strategy to adaptively unify the tasks of multiple vertices for enhanced parallelism and thread utilization (\S~\ref{sec:bicliquecounting}). 
Secondly, the skewed distribution of vertex degrees and the unpredictable workload during depth-first traversal lead to a marked imbalance during runtime.
%distribution of tasks and the inadequacy of workload distribution before runtime. 
Consequently, we employ a combination of pre-runtime and runtime task distribution strategies to achieve equilibrium in the distribution of workload among threads (\S~\ref{sec:workloadbalance}).

{\textit{\textbf{Challenge \uppercase\expandafter{\romannumeral3}}: How to improve the scalability for large graphs?}} 
The limited memory capacity of GPUs presents a challenge in accommodating large-scale graphs within device memory, typically ranging from a few to tens of gigabytes. Real-world graphs, however, expand exponentially, posing difficulties for complete residence in GPU memory~\cite{guo2020gpu}. 
Various efforts have addressed limited device memory through graph partitioning~\cite{guo2020gpu,pandey2021trust}. Yet, due to partition interdependence, frequent loading and eviction may occur, resulting in significant transmission overhead. Moreover, a substantial portion of imported data remains unutilized. This inefficiency, coupled with computation cores waiting for required data transfers, impedes throughput.
Leveraging the fact that clique computations involve interactions with, at most, 2-hop neighbors, we partition the graph into disjoint and autonomous subgraphs. This approach allows computations for any given vertex to occur within the confines of the current subgraph, eliminating the need for on-demand data loading (\S~\ref{sec:outofmemory}).

\textbf{Contributions.} In this work, we introduce \textsf{GBC} (\underline{\textsf{G}}PU-based \underline{\textsf{B}}iclique \underline{\textsf{C}}ounting), the pioneering work facilitating $(p,q)$-biclique counting by harnessing the massive parallelism of GPUs. We outline the key advancements as follows:

\vspace{-0.5mm}
\begin{itemize}
    \item We devise a novel \emph{data structure} and \emph{vertex reordering} technique to implement highly efficient intersection computation on GPUs.
    \item We advocate the adoption of a \emph{hybrid search} strategy to optimize thread utilization. Additionally, our \emph{joint load-balancing} strategy optimizes pre-runtime and runtime workload allocation for improved efficiency.
    \item We propose a \emph{communication-free partitioning} method, enabling the computation of $(p,q)$-bicliques on large-scale graphs using GPUs. 
    \item We conduct \emph{comprehensive experiments} across ten diverse datasets to demonstrate the superior performance of the proposed approach. The results showcase an average speedup of over 400$\times$ compared to existing baselines. 
\end{itemize}
\vspace{-0.5mm}

\textbf{Roadmap.} We formally define the problems in \S~\ref{sec:preliminaries}, followed by a brief introduction to the leading CPU-based solution and our basic implementation on GPUs in \S~\ref{sec:sota}. \S~\ref{sec:bicliquecounting} illustrates the search paradigm, and \S~\ref{sec:optimizations} presents various optimization techniques, including data structure, vertex reordering, and load balancing. Scalability consideration is addressed in \S~\ref{sec:outofmemory}. Experiments are presented in \S~\ref{sec:evaluation}. We review related works in \S~\ref{sec:relatedwork} and conclude the paper in \S~\ref{sec:conclusion}.

\section{Preliminaries}
\label{sec:preliminaries}
% \vspace{-1mm}

In this section, we provide the formal problem definition and give a brief introduction to GPUs. {The frequently used notations are summarized in Table~\ref{tab:symbols}}.

% \vspace{-1mm}
\subsection{Problem Definition}
Given an unweighted and undirected bipartite graph $G = (U, V, E)$, $U(G)$ and $V(G)$ denote two disjoint sets of vertices on the upper and lower layers, respectively (i.e., $U(G)\cap V(G)=\emptyset$). $E(G)\subseteq U(G)\times V(G)$ represents the edge set of $G$, where a edge $(u,v)$ can only exist between $u\in U(G)$ and $v\in V(G)$. {We use $N(u, G)=\{v|(u,v)\in E(G)\}$ and $N_{2}(u,G)=\{u'|(u,v)\in E(G)\wedge (u',v)\in E(G)\}$ to denote the 1-hop neighbors (vertices directly connected to $u$) and 2-hop neighbors (vertices indirectly connected to $u$ via an intermediate vertex) of vertex $u$ in $G$, respectively.} 
{Additionally, we extend this notion to encompass the 2-hop neighbors sharing at least $k$ ($k=p$ or $q$ in our problem) common 1-hop neighbors with $u$, i.e., $N^k_2(u, G)= \{u^\prime|u^\prime \in U\cup V \text{and } |N(u, G) \cap N(u^\prime, G)| \geq k \}$. 
By default, when referring to 2-hop neighbors, we specifically indicate $N^k_2(u, G)$, particularly during intersection operations.}
% {Additionally, we extend this notion to encompass the 2-hop neighbors sharing at least $k$ ($k=p$ or $q$) common 1-hop neighbors with $u$, i.e., $N^k_2(u, G)= \{u^\prime|u^\prime \in U\cup V \text{and } |N(u, G) \cap N(u^\prime, G)| \geq k \}$.}
% {By default, the 2-hop neighbors refer to $N^k_2(u, G)$.} 
The degree of $u$ is defined as $d(u,G)=|N(u,G)|$. For simplicity, the notation omits $G$ when the context is self-evident.

% \vspace{-1mm}
\begin{definition}[\textbf{(\textit{p,q})-Biclique}~\cite{yang2021p}]
    Given a bipartite graph $G = (U, V, E)$, a \textbf{biclique} $B(L, R)$ is a complete bipartite subgraph of $G$, where $L\subseteq U(G)$, $R\subseteq V(G)$, and $\forall (u,v)\in L\times R$, $(u,v)\in E(G)$. A biclique $B(L, R)$ is a \textbf{(p,q)-biclique} if it satisfies $|L|=p$ and $|R|=q$, where $p$, $q$ are two integers.
\end{definition}
% \vspace{-1mm}

% \vspace{-2mm}
\begin{example}
    {In Figure~\ref{fig:example and time breakdown}(a), two (3,2)-bicliques exist within $G$: $B_1(L_1,R_1)$, where $L_1=\{u_1,u_2,u_3\}$ and $R_1=\{v_1,v_2\}$, and $B_2(L_2,R_2)$, where $L_2=\{u_1,u_2,u_4\}$ and $R_2=\{v_0,v_2\}$.
    As the $(p,q)$-biclique constitutes a complete bipartite subgraph of $G$, the primary computational workload involves identifying the common neighbors of the partial result, essentially the intersection operations.
    }
\end{example}
% \vspace{-2mm}

% \begin{definition}[\textbf{$(p,q)$-Biclique}~\cite{yang2021p}]
%     Given a bipartite graph $G$ and two integer $p$, $q$, a biclique $B(L, R)$ of $G$ is a \textbf{$(p, q)$-biclique} if it satisfies $|L|=p$ and $|R|=q$. 
% \end{definition}

% \begin{definition}[\textbf{$\tau$-Strength 2-Hop Neighbor}~\cite{yang2021p}]
%     Given a bipartite graph $G = (U, V, E)$ and an integer $\tau$, the $\tau$-strength 2-hop neighbors of $w$ in $G$, denoted as $N^\tau_2(w, G)$, covers all vertices in $G$ with at least $\tau$ common neighbors with $w$, i.e., $N^\tau_2(w, G)= \{w^\prime|w^\prime \in U\cup V \text{and } |N(w, G) \cap N(w^\prime, G)| \geq \tau \}$\footnote{For simplicity, and without ambiguity, for $u\in  U$, we refer to $q$-strength neighbors as 2-hop neighbors $N_2(u, G)$, i.e., $N_2(u, G) = N^q_2(u, G)$.}.
% \end{definition}

% \begin{definition}[\textbf{2-Hop Graph}~\cite{yang2021p}]
%     Given a bipartite graph $G = (U,V,E)$, and a pair of parameters $p$ and $q$, the 2-hop graph $H=(U, E)$ of $G$ is a graph induced by $G$ with satisfying: (i) $U(H):=U(G)$, and (ii) $\forall u, v \in U(H), (u,v) \in E(H)$ if $u$ and $v$ are 2-hop neighbors in $G$.
% \end{definition}

\textbf{Problem Statement.} Given a bipartite graph $G$ and two integers $p, q$, the problem of \textbf{\textit{biclique counting}} aims to determine the cardinality of $(p, q)$-bicliques of $G$.

% \vspace{-2mm}
\subsection{GPU Architecture}
\label{gpuarch}
% \vspace{-1.5mm}

GPU is a high-performance hardware with tens of streaming multiprocessors (SMs), each functioning as an autonomous processing unit equipped with hundreds of cores. 
% For programming, CUDA (Compute Unified Device Architecture) abstracts GPU architecture, acting as a bridge between an application and its potential GPU implementation.
{The CUDA (Compute Unified Device Architecture) programming model organizes 32 threads into a warp, where threads execute in a Single Instruction Multiple Thread (SIMT) manner.} Hence, program branches would lead to thread divergence and degenerate performance. 
In CUDA, a block consists of multiple warps and is assigned to an SM. All blocks collectively form a grid with all GPU threads.

GPU typically has multiple levels of memory hierarchy. 
Global memory, shared among all threads, is the largest GPU memory reservoir. Reading (writing) data from (to) it incurs significant latency.
In CUDA, all threads within a warp access the consecutive addresses in the global memory, which is known as coalesced memory access. Therefore, accessing non-consecutive memory for threads in the same warp necessitates multiple memory transactions, leading to low bandwidth. 
Each SM has a fast, small shared memory (typically 16KB to 64KB per SM) exclusively accessible by threads within a block.

\begin{table}[t]
	\vspace*{-0.1in}
	\centering
	\setlength{\tabcolsep}{2pt}
	\caption{{Symbols and descriptions.}}
	\vspace{-0.1in}
	\begin{tabular}{p{2.4cm}p{6cm}}
		\toprule[0.8pt]
		{\textbf{Symbols}} & {\textbf{Description}} \\ \midrule
		{$G$} & {the bipartite graph} \\ \midrule
		{$U(G) (V(G))$, $E(G)$} & {the upper (lower) layer vertex set and the edge set of $G$}\\ \midrule
		{$N(u,G)$, $N_2(u,G)$, $N^k_2(u,G)$} & {1-hop neighbors, 2-hop neighbors of $u$, and 2-hop neighbors sharing at least $k$ 1-hop neighbors with $u$}\\ \midrule
		{$B(L,R)$} & {the biclique $B$ with vertex sets $L$ and $R$} \\ \midrule
		{$p, q$} & {the sizes of $L$ and $R$, respectively} \\ \midrule
        {$C_L, C_R$} & {the candidate sets for $L$ and $R$, respectively}\\ 
        
		\bottomrule[0.8pt]
	\end{tabular}\label{tab:symbols}
	\vspace*{-0.1in}
\end{table}

% \vspace{-2mm}
\section{Base Model \& GPU Baseline }
% \vspace{-2mm}

\begin{figure}[tbp]
    \centering
    \includegraphics[width=0.489\textwidth]{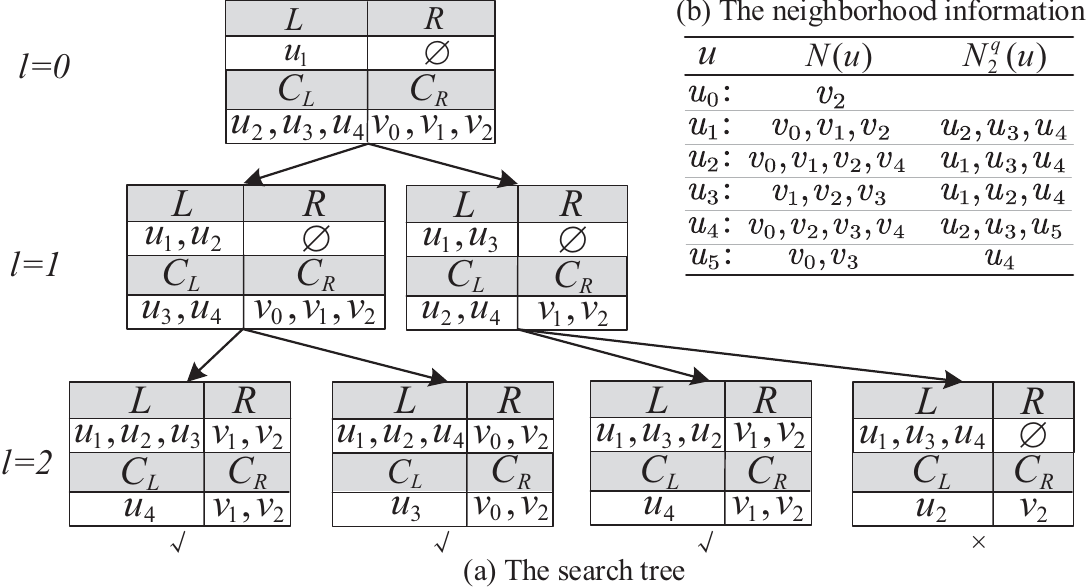}
    \vspace{-6mm}
    \caption{{A walk-through example of basic model (\textsf{Basic}).}}
    \label{basic_example}
    \vspace{-6mm}
\end{figure}

\label{sec:sota}
In this section, we first discuss the basic model and leading solutions on CPU, followed by our baseline design on GPU. Then we give an overview of our proposed methods.

% \vspace{-2mm}
\subsection{Basic Model for Biclique Counting}
% \vspace{-1.5mm}

The basic model (referred to \textsf{Basic}) for biclique counting first selects one side of the vertices as the start vertices (say $U$)\footnote{Without loss of generality, assuming \textsf{Basic} consistently opts for layer $U$ as the starting layer.}. Subsequently, it assembles the 2-hop neighbors of these vertices that share at least $q$ 1-hop neighbors, constituting the candidate set for this layer (denoted as $C_L$). 
Through iterative procedures, candidates in $C_L$ are incrementally added to $L$ to expand the partial result on layer $U$, and $C_L$ is updated by intersecting with the 2-hop neighbors of vertices in $L$. 
Simultaneously, the candidate set $C_R$ undergoes updates by intersecting with 1-hop neighbors of vertices in $L$. 
After iterating the above steps $p$ times, we form a $(p,q)$-biclique $B(L,R)$ by selecting $q$ vertices (if available) from $C_R$ to $R$ and combining them with $L$. 
Otherwise, \textsf{Basic} backtracks and explores adding other candidates to form a $(p,q)$-biclique. 
The following example illustrates the workflow of \textsf{Basic}.

% \vspace{-1mm}
\begin{example}
    Given $G$ in Figure~\ref{fig:example and time breakdown}(a), and two parameters \textit{p=3,q=2}, the workflow of the \textsf{Basic} to retrieve $(3,2)$-biclique starting from $u_1$ is presented in Figure~\ref{basic_example}(a). The neighborhood information of $G$ is summarized in Figure~\ref{basic_example}(a). 
    At level 0, \textsf{Basic} initializes $L$ by adding $u_1$ and establishes $C_L,C_R$ as $N_2^q(u_1)$ and $N(u_1)$, correspondingly.
    With the inclusion of $u_2$ in $L$, \textsf{Basic} traverses the left branch from level 0 to level 1. Subsequently, \textsf{Basic} updates $C_L$ and $C_R$. Specifically, we have $C_L=C_L\cap N_2^q(u_2)=\{u_3,u_4\}$ and {$C_R=C_R\cap N(u_2)=\{v_0, v_1, v_2\}$}.
    \textsf{Basic} advances to the leftmost leaf at level 2 by appending $u_3$ to $L$ and updating $C_L$ and $C_R$. Consequently, a $(3,2)$-biclique, i.e., $(\{u_1,u_2,u_3\},\{v_1,v_2\})$ is found by selecting 2 vertices from $C_R$ to $R$ in conjunction with $L$.
    Similarly, another two $(3,2)$-bicliques can be identified, namely $(\{u_1,u_2,u_4\}$ and $\{v_0,v_2\})$ and $(\{u_1,u_3,u_2\},\{v_1,v_2\})$ (a duplicate).
\end{example}
% \vspace{-2mm}

\textbf{Set intersection.}
% As depicted in Figure~\ref{fig:example and time breakdown}(b), the intersection operation constitutes the primary workload of Figure~\ref{basic_example}. 
In CPU implementations, the common practice for intersection operation in Figure~\ref{basic_example} is linear search~\cite{yang2021p}. 
Conversely, on the GPU, binary search is typically used~\cite{Hu0L21}, with one set as the search key and the other as the search list.
However, existing binary search-based methods suffer from notable computational overhead and memory reads (see \S V-A for details).

{\textbf{State of the art solution on CPU.}
Building upon \textsf{Basic}, \textsf{BCL} employs recursion to continuously select vertices from candidate sets to expand partial results. 
\textsf{BCL} further minimizes the time complexity by utilizing the degree information to select the starting layer.
To reduce intermediate results, \textsf{BCL} employs preallocated arrays and vertex labeling techniques, replacing frequent array creation with array element switching.
Yang et al.~\cite{yang2021p} parallelize \textsf{BCL}, resulting in \textsf{BCLP}, where vertices from the selected layer are distributed to different CPU threads, and each thread executes the \textsf{BCL} algorithm.
Nevertheless, it encounters performance issues outlined in~\S~\ref{introduction} when confronted with dataset size or biclique scale.}

% \textbf{State of the art solution on CPU.} Building upon \textsf{Basic}, Yang et al.~\cite{yang2021p} introduce \textsf{BCL}, which utilizes the degree information of layers to select the layer that minimizes the time complexity as the starting layer. Nevertheless, it suffers from the performance issues mentioned in~\S~\ref{introduction} confronted with dataset size or biclique scale.

Running \textsf{Basic} in parallel on a GPU is straightforward by distributing tasks of different vertices to distinct threads (groups). However, simply porting it to the GPU does not fully harness the high-performance capabilities of the processor. 
In~\S~\ref{gpubaseline}, we present the trivial GPU baseline derived from \textsf{Basic} and elaborate on why its efficiency is not fully realized, despite utilizing such a high-performance processor. 

% \vspace{-1.5mm}
\subsection{Baseline for GPU Implementation}
\label{gpubaseline}
% \vspace{-1.5mm}

The exploration initiated from a vertex $u$ establishes a search tree with $u$ as its root. The search trees of distinct vertices are mutually independent. 
To exploit parallelism, we allocate the vertices in the selected layer to various thread blocks. These blocks then autonomously undertake exploration to retrieve bicliques for their allocated vertices in a backtracking manner. 
% It's worth noting that although breadth-first search (BFS)  might be more suitable for parallelism, our choice of DFS is necessitated by limited GPU memory, as we cannot afford to allocate numerous arrays for storing the explosive partial results. 
Notably, vertices with 2-hop neighbors less than $p - 1$ are not allocated. This exclusion is due to the impossibility of finding a $(p, q)$-biclique on the search trees rooted at these vertices. 
To avoid duplicate results (e.g., the 1st and 3rd leaf nodes of the search tree in Figure~\ref{basic_example}), we assign a unique vertex priority (Definition~\ref{def:priority}) for the vertices on the selected layer and traverse vertices from high priority to low priority. The vertex priority defined in Definition~\ref{def:priority} prevents concentration of computation on high-degree vertices caused by the power-law distribution of vertex degrees, thereby achieving a more balanced workload. Moreover, neighbors with lower priority are not stored to reduce memory overhead. 

% \vspace{-1mm}
\begin{definition}[\textbf{Vertex Priority}]
\label{def:priority}
For any vertex $u$ in the selected layer (suppose $U$) of the bipartite graph $G$, the priority $\mathcal{P}(u)$ is an integer in $[1,|U|]$. For any two vertices $u$ and $w$ in $U$, $\mathcal{P}(u)>\mathcal{P}(w)$ if:

\begin{enumerate}
\vspace{-0.5mm}
    \item $|N_2^q(u)| < |N_2^q(w)|$, or
    \item $|N_2^q(u)| = |N_2^q(w)|$ and $id(u)<id(w)$
\vspace{-0.5mm}
\end{enumerate}

where $id(u)$ is the unique vertex ID of $u$
\end{definition}
% \vspace{-1mm}

% We are accustomed to using recursion to travel a tree by materializing \textsf{Basic} on GPUs, which is infeasible. 
{Directly using recursion to materialize Basic on GPUs is infeasible.}
On one hand, managing memory distribution becomes challenging, and memory consumption is high when recursion is employed on GPUs~\cite{almasri2022parallel}. On the other hand, recursion makes it difficult to share intermediate results among threads.
Therefore, we opt for iteration and use arrays $C_R$ and $C_L$ to respectively store intersection results in each level for backtracking.
Nodes in the search trees entail two computationally intensive operations:
% \textcolor{red}{[LS: clarify why using array $C_L,C_R$ for each layer instead of a single $C_L,C_R$ for all layers, the conversion of the base model on CPU to GPU?]}
% \footnote{We use node to distinguish the vertex of graph.}  

% \vspace{-1mm}
\begin{enumerate}
    \item Intersect $C_L[l-1]$ and $N_2^q(u)$ for $C_L[l]$;
    \item Intersect $C_R[l-1]$ and $N(u)$ for $C_R[l]$.
\end{enumerate}
% \vspace{-1mm}

% Essentially, constructing the new subgraph $H^\prime$ involves identifying the common vertices between current $H^\prime$ and $N(u,H)$. In our GPU implementation, {we leverage intersection to update $H^\prime$, and use arrays to store $H^\prime$ for each layer, i.e., $H^\prime[l] = H^\prime[l-1]\cap N(u,H)$}. 
\textit{These two operations unify the processes of computing the candidate sets for both $L$ and $R$ into intersection calculations, holding significant potential for parallel execution, effectively harnessing the substantial parallelism capabilities of GPU.}
% As a consequence, we utilize multiple warps to concurrently perform set intersection operations. 

We employ parallel binary search for conducting intersections on GPUs, a technique well-regarded for its efficiency~\cite{Hu0L21}. With parallel binary search, threads organized within a warp retrieve an element from the sorted set $C_L[l-1]$ and perform searches against the elements within the sorted set $N_2^q(u)$. The rationale behind fetching elements from $C_L[l-1]$ rather than $N_2^q(u)$ is rooted in the fact that the size of $C_L[l-1]$ often remains smaller than that of $N_2^q(u)$. Likewise, we adopt a similar approach for computing the intersection results between  $C_R[l-1]$and $N(u)$. 

Nonetheless, transferring the algorithm directly to the GPU leads to inefficiencies. In empirical experiments, there are instances where the GPU implementation, despite employing thousands of threads, exhibits slower performance compared to its CPU counterpart (\S~\ref{sec:overallperformance}). The limitations of this design have been elucidated as challenges in~\S~\ref{introduction}.

\subsection{Solution Overview}
\label{solutionoverview}
% \vspace{-0.5mm}

To address these limitations, we introduce GPU-based biclique counting, named \textsf{GBC}, which includes a series of innovative designs. We initially formulate a search paradigm, specifically a hybrid DFS-BFS search, to enhance thread utilization (\S~\ref{sec:bicliquecounting}). Building upon this paradigm, we put forth a series of optimization techniques to further elevate the performance of \textsf{GBC}. Firstly, we introduce a novel data structure, i.e., Hierarchical Truncated Bitmap (HTB), featuring truncated bitmaps to facilitate fast intersection (\S~\ref{sec:bitmap}). We further devise a vertex reordering strategy tailored to compact bitmaps to mitigate memory overhead and access costs (\S\ref{sec:optimizations}-B). Secondly, we develop a joint load balancing strategy, which combines pre-runtime task allocation and runtime task stealing, to achieve an equitable distribution of workloads across GPU blocks (\S~\ref{sec:workloadbalance}). Lastly, we propose a communication-free graph partitioning method, known as \textsf{BCPar}, to handle graphs that surpass the capacity of device memory (\S~\ref{sec:outofmemory}).

\section{Hybrid DFS-BFS Exploration}
\label{sec:bicliquecounting}
% \vspace{-1mm}

In this section, we introduce the search paradigm in \textsf{GBC}, which serves as the backbone of our methodology.

The GPU baseline follows DFS exploration, which gives rise to two performance issues. 
Firstly, due to GPU memory transactions being executed in a coalesced manner, DFS exploration leads to bandwidth wastage as only one vertex is processed at a time. Secondly, as the search layers deepen, the sizes of intermediate results ($C_L[l]$ and $C_R[l]$) progressively decrease, often falling below the number of threads in a warp. 
Consequently, at deeper search levels, there are significantly fewer active threads compared to the large pool of available threads, resulting in substantially reduced thread utilization. 
% Conversely, while BFS presents substantial memory challenges, it is conducive to improving parallelism. Considering the strengths and weaknesses of both DFS and BFS, we harmonize their advantages while mitigating their drawbacks. This amalgamation results in a hybrid DFS-BFS strategy, where \textsf{GBC} operates in a DFS fashion globally while incorporating a local BFS strategy in specific regions. This approach enhances parallelism while effectively managing memory constraints. 

Note that, the search nodes at level $l$ sharing the same parent possess identical $C_L[l-1]$ and $C_R[l-1]$ (e.g., {$C_L$ and $C_R$ before update} in different nodes at level $l=1$ in Figure~\ref{basic_example}(b)). When applying intersection operations, the only distinction lies in their adjacency lists (e.g., $N_2^q(u_2), N(u_2)$ of $u_2$ vs. $N_2^q(u_3), N(u_3)$ of $u_3$ at level 1 in Figure~\ref{basic_example}). {Furthermore, the intersection operations among these children are mutually independent. }
{This observation presents an opportunity to simultaneously apply intersections for the children, akin to a BFS approach concerning the parent, which contributes to bandwidth utilization and enhanced parallelism.} 
However, embracing a global BFS exploration poses considerable memory challenges~\cite{lin2016network}. Consequently, considering the strengths and weaknesses of both DFS and BFS, we harmonize their advantages while mitigating their drawbacks. This assembly results in a hybrid DFS-BFS strategy, where \textsf{GBC} is executed in a DFS fashion globally while incorporating a local BFS strategy in specific regions. This approach enhances parallelism while effectively managing memory constraints. 
To clarify, let's consider a node in the search tree with $n$ children at level $l$. {In order to concurrently compute $C_L(l)$ for all children,} we duplicate $C_L[l-1]$ $n$ times and concatenate these duplicates to form an extended array, which is then stored in shared memory. Subsequently, each thread within a warp retrieves an element from this extended array and performs a binary search within $N_2^q(u)$. A similar approach is applied to calculate $C_R[l]$. 

Suppose there are $k$ warps handling tasks for the children of a specific vertex, and let $|C_L[l-1]| (|C_R[l-1]|) = m$. 
Without optimization, we would be required to perform $\lceil \frac{m}{32\times k} \rceil \times n$ intersection operations, while with optimization, only $\lceil \frac{m\times n}{32\times k} \rceil$ intersection operations are necessitated. 
In cases where $m < 32 \times k $, which is common in practice, the number of intersection operations is $n$ without optimization, compared to $\lceil \frac{m}{32\times k} \times n \rceil$ with optimization. It's worth noting that as the search level increases, the value of $m$ typically decreases, leading to a more efficient optimization performance.

% It's essential to note that at the first level, $S$ is initialized as the adjacency list of the visited vertex. In this case, the size of $S$ may not be small enough to make the optimized strategy significantly more efficient. Therefore, threads in this case directly fetch elements from the smaller of the two sets, either $S$ or $N(u,G)$, and search within the larger one to avoid unnecessary overhead.

\begin{figure}[tb]
    \centering
    \includegraphics[width=0.48\textwidth]{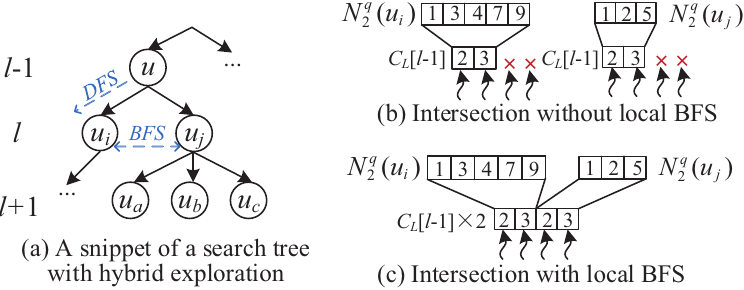}
    \vspace{-2mm}
    \caption{Hybrid DFS-BFS exploration in \textsf{GBC} (We use vertices newly added on the selected layer to denote nodes in the search tree for brevity).}
    
    \label{hybrid_exploration}
    \vspace{1mm}
\end{figure}

% \vspace{-2mm}
\begin{example} 
% \vspace{-1mm}
% \setlength{\abovedisplayskip}{1pt}
% \setlength{\belowdisplayskip}{1pt}
    Consider a snippet of a search tree in Figure~\ref{hybrid_exploration}(a), where we compute $C_L[l]$ for all children $\{ u_i,u_j\}$ of vertex $u$ at level $l$. Suppose there are 4 threads in a warp. As depicted in Figure~\ref{hybrid_exploration}(b), using only DFS with all threads assigned to a vertex ($u_i$ or $u_j$) to obtain $C_L[l]$ for ${ u_i, u_j}$ requires two separate intersection operations, with two out of four threads remaining idle, resulting in a significant waste of computing resources. In contrast, with hybrid DFS-BFS exploration, where we handle $u_i$ and $u_j$ concurrently by duplicating $C_L[l-1]$, only one intersection is necessary to acquire the results for ${ u_i, u_j}$, as depicted in Figure~\ref{hybrid_exploration}(c). This approach leads to a more efficient utilization of computing resources, ultimately reducing time consumption.
% \vspace{-0.5mm}
\end{example}
% \vspace{-5mm}

\setlength{\textfloatsep}{0pt}
\begin{algorithm}[t]
    \caption{\textsf{GBC}}
    \label{gbc}
    \LinesNumbered
    \linespread{0.9}\selectfont
    \KwIn{a bipartite graph $G$, two integer $p$ and $q$}
    \KwOut{all $(p, q)$-bicliques $\mathcal{B}$}
    % \SetKwProg{Fn}{procedure}{}{}
    \SetKwFunction{GPUBasedListing}{$\mathsf{GPUBasedListing}$}
    \SetKwFunction{IntersectionBatch}{$\mathsf{IntersectionBatch}$}
    % Transform $H$ to a DAG;\\
    Select one layer in $G$ as an anchor\;
    Collect 2-hop neighbors $N_2^q(\cdot)$ for vertices in the anchored layer\;
    Filter unpromising vertices and obtain a set $Roots$\;
    \GPUBasedListing{$G, \mathcal{B}, p, q, Roots$}\;
    \Return{$\mathcal{B}$}\; %\linebreak[1]

    \textbf{procedure} \GPUBasedListing{$G,\mathcal{B}, p, q, Roots$}\\
    $i \leftarrow blockIdx$\;
    \While{$i < |Roots|$}{
        $C_R[blockIdx], C_L[blockIdx]\leftarrow p$ empty arrays\;
        % $H\_Sub[blockIdx]\leftarrow p$ arrays initialized as empty\;
        $u\leftarrow Roots[i]$, $L\leftarrow \emptyset$, $l\leftarrow 0$\;
        $C_R[blockIdx][0]\leftarrow N(u)$\;
        $C_L[blockIdx][0]\leftarrow N_2^q(u)$\;
        \While{$l\ge 0$}{
            $B\leftarrow$ next batch in $C_L[blockIdx][l]$\;
            \uIf{$threadIdx < set\_num$}{
                \IntersectionBatch{$B, C_R, l, V$}\;
            }
            \Else{
                \IntersectionBatch{$B, C_L, l, U$}\;
            }
            \If{$l = p$}{
                \ForEach{$R\subseteq C_R[l-1] \wedge |R|=q$}{
                    $\mathcal{B}\leftarrow \mathcal{B}\cup {(L,R)}$\;
                }
                $l \leftarrow l - 1$, $L\leftarrow L-\{u\}$\;
                \Return\;
            }
            \ForEach{$u^\prime\in C_L[blockIdx][l]$}{
                % $|S[blockIdx][l][v':v'+1]|\ge q \wedge$ \\ $|H\_Sub[blockIdx][l][v':v'+1]|\ge p - l - 1$
                \If{Pruning conditions are not satisfied}{
                    $u \leftarrow u^\prime$, $l\leftarrow l + 1$, $L\leftarrow L\cup \{u^\prime\}$\;
                    $C_L[blockIdx][l] - \{u^\prime\}$\;
                }
            }
            \If{$C_L[blockIdx][l] = \emptyset$}{
                $l\leftarrow l - 1$, $L\leftarrow L-\{u\}$\;
            }
        }
        $i \leftarrow i + gridDim$\;
    }
    \textbf{procedure} \IntersectionBatch{$B, Arr, l, G$}\\
    \While{$B \ne \emptyset$}{
        Copy $Arr[l-1]$ $|B|$ times into buffer\;
        Intersect with $N(u^\prime)$ or $N_2^q(u^\prime)$ where $u^\prime\in B$\;
        Write results into $Arr[l]$\;
    }
\end{algorithm}
\setlength{\textfloatsep}{0pt plus 0pt minus 0pt}

% {\textbf{Vertex Ordering.} To avoid duplicate results, we adopt an accessing order from the low degree vertices to the high degree vertices. Additionally, by doing this, the workload of each thread block will be more balanced due to the power law distributation of vertex degrees. } 

% \vspace{1mm}
\textbf{Batching.} The shared memory's capacity is quite limited and adhering to the BFS phase of the previous hybrid exploration strategy may lead to an explosion of shared memory usage with some large-sized adjacency lists.
To overcome this limitation and ensure the strategy aligns with the constraints of limited memory, we resort to batching. This adaptation allows us to effectively manage the memory while maintaining the benefits of parallelism. 
Concretely, \textsf{GBC} divides the children into multiple independent batches with size $\lfloor \frac{|B|}{|C_L[l-1]|} \rfloor$, where $B$ is the buffer allocated in shared memory for storing the duplicates. Once a batch is processed using BFS, it switches to DFS instead of continuing with the remaining batches.

% \vspace{-2mm}
Algorithm~\ref{gbc} shows the pseudo-code of \textsf{GBC}. 
At first, we select one layer and collect the 2-hop neighbors for each vertex in the selected layer\skip\footins\smallskipamount\footnote{We adopt the layer selecting strategy proposed in~\cite{yang2021p} for its effectiveness.}. 
The main part of the algorithm is \textsf{GPUBasedListing}. In this procedure, we first initialize the arrays $C_R$ and $C_L$ to store the intermediate results and distribute root vertices to each thread block (lines 9--12). 
Then we retrieve bicliques level by level (lines 13--29). For each batch $B$ in $C_L[blockIdx][l]$, we employ the first $set\_num$ threads to calculate $C_R[l]$ by intersecting $C_R[l-1]$ and $N(u^\prime)$ for $u^\prime \in B$ via \textsf{IntersectionBatch}. While the other threads calculate the results of $C_L[l]$ simultaneously (lines 14--18). 
When the search reaches the last level, we store the qualified bicliques into $\mathcal{B}$ and backtrack (lines 19--23). Otherwise, for each vertex $u^\prime\in C_L[blockIdx][l]$, we examine if the intersection results satisfy pruning conditions, i.e. $|C_R\cap N(u^\prime)| \ge q$ and $|C_L\cap N_2^q(u^\prime)| \ge p - l - 1$. 
The qualified vertices move to the next iteration (lines 24--27). 

% \vspace{-2.5mm}
\noindent {\textbf{Discussion.}}
{ 
% We compare \textsf{BCLP}~\cite{yang2021p} with \textsf{GBC}. 
\textsf{GBC} adopt the layer-based approach as \textsf{BCLP}~\cite{yang2021p} for efficiency consideration. However, they differ significantly.
% However, they target different platforms and utilize distinct methodologies.
\underline{First,} \textsf{BCLP} adopts iteration instead of recursion to alleviate memory issues and opens the potential for reusing intermediate results.
% \textsf{BCLP} relies on recursion on the CPU, whereas \textsf{GBC} adopts iteration on GPU, due to the memory management challenges, high memory consumption and difficulty in reusing intermediate results inherent in recursion.
Moreover, to fully harness the parallel power of the GPU, we implement a hybrid DFS-BFS exploration rather than pure DFS as in \textsf{BCLP}. 
\underline{Second,} \textsf{BCLP} preallocates arrays for each vertex (i.e., task) on the selected layer, leading to high memory demand on the GPU due to the large number of parallel tasks. Additionally, the labeling technique of \textsf{BCLP} introduces substantial data movement, which proves time-consuming on the GPU. 
Therefore, \textsf{GBC} employs parallel intersection computation, which is efficient on the GPU and helps conserve memory.
% \textsf{BCLP} utilizes a subgraph-based method for retrieving sub-cliques on the selected layer. However, this approach entails building a subgraph for each vertex (task) on the selected layer, leading to high memory demand on the GPU with its thousands of parallel tasks. Additionally, the labeling technique of \textsf{BCLP} introduces substantial data movement, which is time-consuming on the GPU. Therefore, \textsf{GBC} employs parallel intersection computation, which is efficient on the GPU and helps save memory.
\underline{Third,} we introduce novel optimizations for enhanced efficiency, including optimizing data structures, vertex reordering, and load balancing strategies. 
In summary, we did not directly utilize the parallel framework of \textsf{BCLP}. Instead, we implemented numerous non-trivial optimizations specifically tailored for GPU architecture to achieve superior performance with \textsf{GBC}.}
% we propose further optimizations to enhance the efficiency of the parallel framework, including optimizing data structures, vertex reordering, and load balancing strategies. In summary, we did not directly utilize the parallel framework of \textsf{BCLP}. Instead, we implemented numerous non-trivial optimizations specifically tailored for GPU architecture.}

% \vspace{-5mm}
\section{Optimizations}
\label{sec:optimizations}
% \vspace{-3.5mm}

\begin{figure*}[tbp]
\centering
    \includegraphics[width=0.92\textwidth]{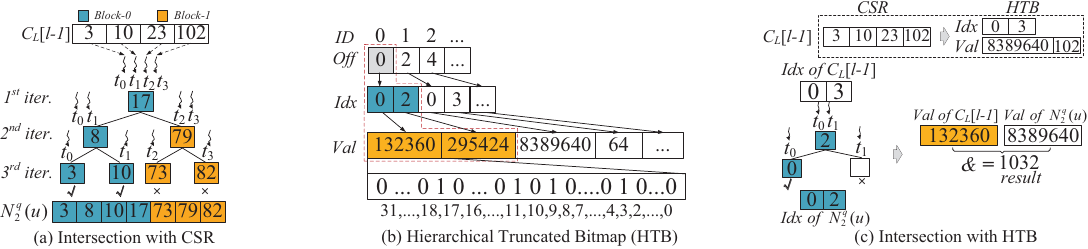}
    \vspace{-2mm}
    \caption{Intersection with different data structures.}
    \label{fig:htb}
    \vspace{-7mm}
\end{figure*}

% In this section, we propose several advanced techniques to optimize the performance of \textsf{GBC}. 
{In this section, we first discuss the limitations of parallel binary search for intersection computation. Motivated by this, we present an advanced data structure and vertex reordering technique to enhance intersection implementation.
Finally, we address load imbalance through pre-runtime and runtime workload distribution.}

% -------------------------------------------------------------Hierarchical Truncated Bitmap---------------------------------------------------------
% \vspace{-2mm}
\subsection{Hierarchical Truncated Bitmap}
\label{sec:bitmap}
% \vspace{-3mm}

{As mentioned in \S~\ref{introduction}, intersection computation constitutes the majority of the workload, accounting for over 90\%.}
Although proven to be efficient~\cite{Ao2011}, two drawbacks persist in parallel binary search on GPU. 
Firstly, binary search necessitates entry-by-entry comparisons, resulting in high computational overhead.
Secondly, comparisons in each iteration require accessing entries stored in global memory, potentially leading to excessive memory transactions. 
{The following example illustrates how intersection performs with binary search under CSR format on GPU and its corresponding limitations.} 

% \vspace{-1.75mm}
\begin{example}
    In Figure~\ref{fig:htb}(a), suppose we compute $C_L[l]=C_L[l-1]\cap N_2^q(u)$ with $C_L[l-1] = \{3,10,23,102\}$ and $N_2^q(u) = \{3,8,10,17,73,79,82\}$, and the entries in $C_L[l-1]$ work as the search keys (same workflow for $C_R[l]$). For simplicity, we assume there are 4 threads in a warp, and each memory transaction accesses 4 integers. 
    In the first iteration, 4 threads access entry 17 residing in the same block with one memory transaction. 
    In the second iteration, thread-0,1 and thread-2,3 access entries 8 and 79 distributed across two data blocks, hence, two memory transactions are required. 
    Similarly, another two memory transactions are required. 
    % Totally, binary search, in this case, performs five memory transactions, while a majority of them are subject to repeated scheduling. Specifically, block-0 and block-1 have been transmitted thrice and twice, respectively.     
\label{example:bs}
% \vspace{-2mm}
\end{example}
% \vspace{-1.75mm}

\noindent {\textbf{Limitation.}
Totally, binary search, in Example~\ref{example:bs}, performs five memory transactions, while a majority experiencing \textit{repeated scheduling}. Specifically, block-0 and block-1 have been transmitted thrice and twice, respectively.
These substantial memory transactions stem from the \textit{entry-by-entry comparisons} with entries residing in different blocks of global memory.} 

% To narrow down the search space and reduce the transaction overhead, we proposed a new GPU-friendly data structure, namely Hierarchical Truncated Bitmap (HTB), to store the adjacency lists. 

{Inspired by the observations above, the key insight to enhance the efficiency of parallel binary search is to narrow down the search space to avoid entry-entry comparisons and compress the data to reduce transaction overhead.
To achieve this, we devise a new GPU-friendly data structure, known as Hierarchical Truncated Bitmap (HTB), to store the adjacency lists.
The main idea of HTB is to use a single 32-bit integer to represent multiple neighbors of a vertex, where each integer accommodating neighbor IDs within the range $[i, i+32]$, where $i$ signifies the ordinal position in the consecutive integer sequence.} 
% Given that an integer can accommodate 32 bits, HTB regards it as a collection of non-negative integers representing neighbor IDs within the range $[i, i+32]$, where $i$ signifies the ordinal position in the consecutive integer sequence. 
For each neighbor $u^\prime$ of vertex $u$, we hash it to the $j$-th bit in the $i$-th integer and set the bit value to 1 to denote its presence, where $i = ID(u^\prime)/32$, $j = ID(u^\prime)\%32$. We then aggregate $i$s and perceive it as an index for these collections. 

% Figure~\ref{fig:htb}(b) illustrates an intuitive sketch of HTB, which has a similar structure to the widely adopted CSR format. 
% HTB consists of three tiers of arrays: 
% (i) \textbf{\textit{Off}} shares similar functionality with the row index in CSR, and we use it to denote the starting position of $Idx$ corresponding to each vertex. Hence, the index of vertex $u$ lies in $Idx[Off[ID(u)]:Off[ID(u)+1]-1]$, 
% (ii) \textbf{\textit{Idx}} is employed to signify the orders of the integers that store the neighbors, i.e., $i$s, 
% and (iii) \textbf{\textit{Val}} is leveraged to indicate the specific bit position within a 32-bit integer where a neighbor is stored, i.e., $j$s.

{Figure~\ref{fig:htb}(b) presents a sketch of HTB, which consists of three tiers of arrays: 
(i) \textbf{\textit{Off}} denotes the starting position of $Idx$ corresponding to each vertex, wherein the index of vertex $u$ occupies $Idx[Off[ID(u)]:Off[ID(u)+1]-1]$, 
(ii) \textbf{\textit{Idx}} signifies the series of the integers that conserve the neighbors, i.e., $i$s, 
and (iii) \textbf{\textit{Val}} indicates the specific bit position within a 32-bit integer where a neighbor is stored, i.e., $j$s.}

% \vspace{-2mm}
\begin{example}
% \vspace{-0.2cm}
    Suppose \textit{ID}$(u)=0$, for each $u^\prime\in N_2^q(u)$ in Figure~\ref{fig:htb}(a), HTB hashes them into $(i,j)$ pairs, which are $\{ (0,3),(0,8),(0,10),(0,17),(2,9),(2,15),(2,18) \}$. 
    For instance, $u^\prime$ with $ID(u^\prime)=3$, whose $(i,j)$ pair is $(0,3)$, is stored in the $0$-th integer with the $j$-th bit being 1. 
    Consequently, $N_2^q(u)$ is distributed into two integers, i.e., the $0$-th and $2$-nd integer stored in $Idx[Off[0]:Off[1]-1] = \{ 0,2\}$, with values in $Val$ being $\{132360,295424\}$, respectively.
% \vspace{-0.2cm}
\end{example}
% \vspace{-1.5mm}

Intersection is executed in a two-phase manner with HTB. In the first phase, we leverage $Idx$ to determine the search range, followed by the extraction of $Val$ containing the collection of neighbors. 
In the second phase, a bitwise AND operation is conducted to ascertain whether the search key exists in the search list or not. 
Figure~\ref{fig:htb}(c) depicts the workflow of binary search with HTB.

% \vspace{-2mm}
\begin{example}
\label{htb_binary}
    Utilizing HTB, we first perform a binary search with the $Idx$ of $C_L[l-1]$ ($\{0,3\}$) over that of $u$ ($\{ 0,2 \}$), yielding the search result $\{ 0 \}$ (left part in Figure~\ref{fig:htb}(c)). 
    Thereafter, we circumvent the need for searching entries in $N_2^q(u)$ stored in the $3$-rd integer. 
    Next, we perform the bitwise AND operation between the values in $Val$ of $C_L[l-1]$ and $N_2^q(u)$ corresponding to the $0$-th integer (i.e., 8389640 and 132360), resulting in the outcome $1032$, including neighbors $\{3,10\}$ (right part in Figure~\ref{fig:htb}(c)). 
\end{example}
% \vspace{-2mm}

We analyze the number of memory transactions in Example~\ref{htb_binary}. 
Executing binary search requires one memory transaction to load the $Idx$ of $N_2^q(u)$ and another memory transaction to access the corresponding entries in $Val$. Consequently, two memory transactions are needed, leading to a reduction of three memory transactions compared to directly executing a binary search with CSR.
Furthermore, the utilization of bitwise AND operations contributes to computational efficiency.

% -------------------------------------------------------------Vertex Reordering---------------------------------------------------------
% \vspace{-4mm}
\subsection{Vertex Reordering}
\label{sec:vertexreorder}
% \vspace{-2mm}

HTB's efficiency depends on the degree of discreteness in the adjacency lists. 
{The narrower the gap between adjacent entries, the fewer integers are required to accommodate the adjacency lists. This results in higher efficiency for HTB, as fewer memory reads are required and more entries can be compared in a single bitwise AND operation.
% In the cases where vertices in an adjacency list are hashed to integers whose quantity approximately matches the list's size, HTB's efficiency would lag behind CSR.
In this section, we introduce \textsf{Border}, a vertex reordering technique to optimize the data layout to reduce the size of HTB.}

Various methods have been devised for reordering vertices in unipartite graphs to improve the hit rate of cache~\cite{sha2021self,boldi2011layered,arai2016rabbit,WeiYLL16}.
% such as \textsf{LLP}~\cite{boldi2011layered}, \textsf{Rabbit Reordering}~\cite{arai2016rabbit}}, while \textsf{Gorder} for improving hit rate of CPU cache.
% Existing vertex reordering methods, such as \textsf{Gorder}~\cite{WeiYLL16}, are designed for unipartite graphs\textcolor{red}{[LS: the purpose of Gorder]} {to improve the hit rate of CPU cache}.
However, reordering vertices for $(p,q)$-biclique counting on GPU poses new challenges. 
First of all, we have to reorder vertices in different layers of $G$ separately since reordering them as a whole would disrupt the vertices' orders on the left and right sides. More importantly, HTB introduces a new requirement for the reordering algorithm to maximize the number of vertices hashed into one integer.
Motivated by the above considerations, we devise \textsf{Border} tailored to reorder vertices of the bipartite graph to prevent their disruption while optimizing the layout for efficiency consideration of HTB. 

% However, for $(p,q)$-biclique counting, we face the challenge of reordering vertices in different layers of $G$ separately. We cannot reorder all the vertices in the whole graph of $G$, as doing so would disrupt the order of vertices on the left and right sides of the bipartite graph\textcolor{red}{[LS: how about fixing one layer when reordering another layer?]}. What's more, HTB introduces a new requirement for the reordering algorithm, namely, the need to maximize the number of vertices hashed into one integer\textcolor{red}{[LS: the key of Border?]} {which is different from the purpose of the existing reordering algorithms}. 
% Motivated by these considerations, we devise \textsf{Border} tailored to reorder the vertices within one layer of $G$ {(different layers of $G$ separately?)}\textcolor{red}{[LS: only one layer?]} while optimizing the layout for efficiency consideration of HTB. 

Algorithm~\ref{Rorder} shows the pseudo-code of \textsf{Border}. 
\textsf{Border} first maps $G$ to an adjacency matrix $M$, where $M(i,j)\in \{0,1\}$ denotes the entry located at the $i$-th row and the $j$-th column. Within this matrix, we refer to 32 entries whose indices are within $[i, 32\times k: 32\times (k+1)](k \in \mathbb{Z}^{+})$ in each row as a $block$. An $m$-$block$ refers to \textit{block} containing $m$ 1s. 
Our observation on real datasets reveals that the abundance of 1-\textit{block}s significantly dominates the realm of non-zero \textit{block}s, which deteriorates the efficiency of HTB owing to their sparsity. 
Consequently, the pursuit of minimizing the quantity of 1-\textit{blocks} results in a more condensed bit representation within an integer of HTB, thus reducing the sizes of \textit{Val} required to encode each adjacency list and improve HTB's efficiency.
\textsf{Border} employs a greedy strategy in each iteration to minimize the number of $1$-$block$s, which includes four steps: 

% \vspace{-1mm}
\begin{itemize}
    \item \textbf{\textit{Step 1}}: Identify vertex $v_{m}$ with most $1$-$block$s (line 9); 
    \item \textbf{\textit{Step 2}}: Construct the candidate set $CandV$ by including vertices sharing the fewest common neighbors with $v_{m}$ (lines 10--12); 
    \item \textbf{\textit{Step 3}}: Calculate profits for all vertices in $CandV$ and select vertex $v_{n}$ with the highest profit (lines 13--16); 
    \item \textbf{\textit{Step 4}}: Exchange the positions of $v_{m}$ and $v_{n}$ and update matrix $M$ (lines 17--19). 
\end{itemize}
% \vspace{-1mm}

Intuitively, each iteration endeavors to reduce the quantity of $1$-$block$s in $M$ as much as possible, and the profit derived from exchanging $v_m$ and $v_n$ is numerically equivalent to the count of reduced $1$-$block$s.
Concerning the $block$s that $v_m$ locates at, the change in the number of $1$-$block$s is divided into two parts: the number of $1$-$block$s converted into $0$-$block$s, denoted as $x_m$, and the number of $0$-$block$s converted into $1$-$block$s, denoted as $y_m$. 
Similarly, for $v_n$, these two parts are denoted as $x_n$ and $y_n$. 
The profit is calculated as $x_m+x_n-y_m-y_n$. 

Notably, we reorder the vertices based on their degrees before executing \textsf{Border}. 
{This preprocessing enhances the compactness of the layout for the adjacency lists of each vertex, consequently diminishing the count of $1$-$block$s and thereby reducing the number of iterations.}
Additionally, step 2 in \textsf{Border} is efficiently performed using matrix multiplication. 

\setlength{\textfloatsep}{0pt}
\begin{algorithm}[t]
    \caption{\textsf{Border}}
    \label{Rorder}
    \LinesNumbered
    \linespread{0.9}\selectfont
    \KwIn{a bipartite graph $G(U,V,E)$, number of iterations $itr$, reorder layer $rl$}
    \KwOut{$G$ with reordered vertices}
    
    \textbf{if} $flag=0$ \textbf{then} exchange $U, V$\;
    Map $G$ to an adjacency matrix $M$\;
    $OBV\leftarrow \emptyset$, $CandV\leftarrow \emptyset$, $max\_profit\leftarrow 0$\;
    \ForEach{$blk\in 1$-$block$s}{
        \If{$blk[\cdot,v] = 1$}{
            $OBV$.append($v$)\;
        }
    }
    % Add the vertices located at $1$-$block$s to $OneBlockVertices$\;
    Count the number of $1$-$block$s for each vertex in $OBV$\;
    \While{$itr--$}{
        Find the vertices $v_m$ with the most $1$-$block$s\;
        \ForEach{$v\in U$}{
            \If{$|N(v) \cap N(v_m)|$ is the smallest}{
                $CandV \leftarrow CandV \cup \{v\}$\;
            }
        }
        \ForEach{$v \in CandV$}{
            \If{$Profit(v) \ge max\_profit$}{
                $v_n\leftarrow v$\;
                $max\_profit \leftarrow Profit(v)$\;
            }
        }
        Exchange the order of $v_m$ and $v_n$\;
        Update the number of $1$-$block$ of each vertex\;
        Update the matrix $M$\;
    }
    \Return{$G$}\;
\end{algorithm}
\setlength{\textfloatsep}{0pt plus 0pt minus 0pt}

% \vspace{-1mm}
\begin{example} 
% \vspace{-2mm}
    Using the adjacency matrix in Figure~\ref{reorderexample}(a) as an example to illustrate an iteration in Algorithm~\ref{Rorder}, assuming 4 bits constitute a block of HTB for brevity.
    
    In Figure~\ref{reorderexample}(b), the integers positioned above the vertices indicate the quantity of $1$-$block$s (depicted as gray boxes) for vertices to be reordered.
    \textsf{Border} first identifies the vertex with the highest count of $1$-$block$s, specifically, v$_5$.
    Subsequently, \textsf{Border} finds vertices sharing the fewest common neighbors with $v_5$ as candidates to exchange order. This is achieved by multiplying the vector $(1,1,1,1)$ (neighbors of $v_5$) with matrix $M$, resulting in $CandV = \{v_1,v_3,v_4,v_{11}\}$.
    \textsf{Border} proceeds to compute the potential profits, denoted by the integers above the vertices in Figure~\ref{reorderexample}(c), associated with the exchange of $v_5$ with candidates in $CandV$. 
    \textsf{Border} finally selects $v_3$ for order exchange with $v_5$, which yields the maximum profit. 
    The matrix after the exchange is shown in Figure~\ref{reorderexample}(d). 
\end{example}
% \vspace{4pt}

% -------------------------------------------------------------Load Balancing---------------------------------------------------------
% \vspace{-18pt}
\subsection{Load Balancing}
\label{sec:workloadbalance}
% \vspace{-6pt}

% The disparate sizes of adjacency lists lead to imbalanced workloads. Additionally, the dynamic nature of DFS-based exploration poses challenges in estimating workloads during runtime, exacerbating the load imbalance issue. 
{Load imbalance is a common dilemma in multi-threaded environments. In the studied problem, two factors contribute to this issue:
(1) the disparate sizes of adjacency lists lead to uneven workloads, and (2) the dynamic nature of DFS-oriented exploration poses challenges in estimating workloads during runtime, exacerbating the load imbalance issue and hindering the application of only static solutions.}
To tackle this issue, we formulate a joint load balancing strategy that integrates pre-runtime task allocation with runtime task stealing. 
% These strategies work in tandem to effectively address load imbalance.

\textbf{Pre-runtime load balancing.}
Evenly distributing vertices in the selected layer among different blocks results in pronounced load imbalance due to variations in the sizes of search trees that arise from the uneven lengths of adjacency lists. 
It is noteworthy that the height of each search tree remains consistent, with its value being $p$ or $q$, depending on the selected layer. Consequently, allocating vertices located at higher levels in the search trees to each thread block yields a more equitable distribution of workloads among the blocks. 
We embrace an edge-oriented approach that evenly distributes the second-level vertices of the search trees across different thread blocks. 
% The experimental results presented in~\S~\ref{sec:evaluation} demonstrate the effectiveness of this strategy. 

\begin{figure}[tbp]
    \centering
    % \vspace{-2mm}
    \includegraphics[width=0.49\textwidth]{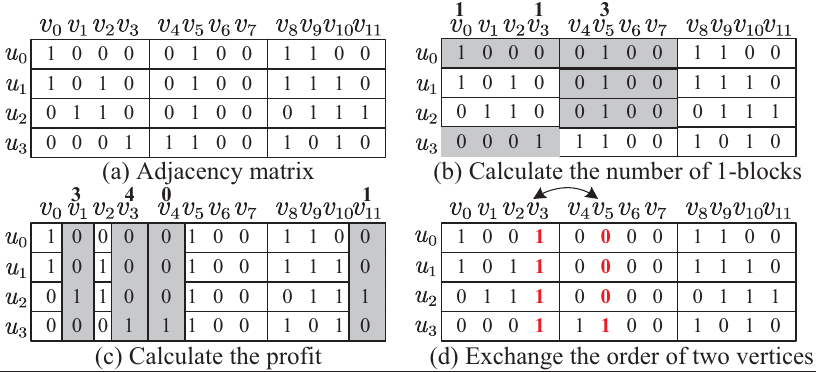}
    \vspace{-6mm}
    \caption{A running example of \textsf{Border}}
    \label{reorderexample}
    \vspace{1mm}
\end{figure}

\textbf{Runtime load balancing.}
The dynamic nature of DFS, characterized by unquantifiable workloads, gives rise to disparities of workloads among the blocks at runtime. 
To guarantee a more equitable distribution of workloads, we implement a work stealing mechanism, whereby the unoccupied blocks are dynamically assigned (sub)tasks from those currently engaged. 

% \textit{GCL} is an array of the same size as the number of blocks on the GPU, where \textit{GCL}[\textit{i}] records the vertex ID being processed by block-$i$ and serves as the starting point for work stealing. \textit{GCL}[$i$] is set to \textit{0xFFFFFFFF} to indicate that all vertices assigned to block-$i$ have been processed, where work stealing should skip.

% In Figure~\ref{worksteal}, we maintain an array \textit{GCL} with the size of the number of blocks on the GPU to denote the processing status of blocks, where \textit{GCL}[\textit{i}] records the number of vertices have been processed by block-$i$ and serves as the starting point for work stealing. Once a block concludes its assigned tasks, the corresponding \textit{GCL} value is set to \textit{0xFFFFFFFF}, initiating the work stealing process.
{In Figure~\ref{worksteal}, we maintain an array \textit{GCL} with the same size as the number of blocks on the GPU  to denote the processing status of blocks, where \textit{GCL}[\textit{i}] records the number of processed vertices by block-$i$ and serve as the starting point if the work is stolen. \textit{GCL}[$i$] will be set to \textit{0xFFFFFFFF} when all vertices assigned to block-$i$ have been processed, indicating that work stealing should skip this block.
% An intuitive way of selecting a block to be stolen is to identify the slowest block, namely the block with the minimum \textit{GCL} value.} However, traversing the entire \textit{GCL} proves time-consuming in scenarios with a substantial number of blocks. Furthermore, both acquiring work from the slowest block and stealing work from other blocks yield the same effect eventually. 
We have the block that has completed its assigned tasks persist in searching for unfinished blocks through \textit{GCL}, as shown in Figure~\ref{worksteal}(a). 
Once a qualified block is identified, as depicted in Figure~\ref{worksteal}(b), it locks the corresponding entry (block) in \textit{GCL} and reads the value. 
Subsequently, it calculates the index of the next vertices to be processed and updates the \textit{GCL} array as in Figure~\ref{worksteal}(c).
Finally, it frees the locked entry as illustrated in Figure~\ref{worksteal}(d).}
% However, employing a lock on the entire \textit{GCL} each time a block accesses it would significantly curtail parallelism. As there is no conflict when two blocks access \textit{GCL} values of distinct blocks, we establish an independent lock for each value in \textit{GCL}, corresponding to each block. This approach results in a more efficient memory access mechanism.

% merrill2012scalabl——Scalable GPU graph traversal, D Merrill, M Garland, A Grimshaw
% wang2016gunrock —— gunrock
% wang2019sep —— sep-graph

Note that,~\cite{merrill2012scalable,wang2016gunrock,wang2019sep} statically allocate thread groups based on adjacency list lengths, which cannot ensure load balance due to the dynamic nature of computations~\cite{kumar2019issues}. 
Additionally,~\cite{wang2016gunrock} incurs overhead from adjacency list partitioning, while~\cite{merrill2012scalable} requires synchronization before thread allocation. 
In contrast, we integrate dynamic work stealing to ensure load balance during execution by efficiently utilizing idle threads.
% For instance, Gunrock divides the adjacency list into equally sized chunks for processing across blocks. The other two methods allocate threads, warps, and thread blocks based on the adjacency list size. 
% They still cannot guarantee load balance due to the dynamic nature of the computations~\cite{kumar2019issues}. 
% In contrast, we integrate dynamic work stealing to ensure load balance during execution.
% Additionally,~\cite{wang2016gunrock} incurs overhead from adjacency list partitioning, while~\cite{merrill2012scalable} requires synchronization before thread group allocation. In contrast, our method efficiently utilizes idle threads for work stealing.
% Furthermore, these methods incur additional overhead. For instance, Gunrock requires adjacency list partitioning, while Merrill et al's method necessitates synchronization before thread group allocation. In contrast, our method efficiently utilizes idle threads for work stealing.

\begin{figure}[htbp]
    \centering
    \vspace{-4mm}
    \includegraphics[width=0.43\textwidth]{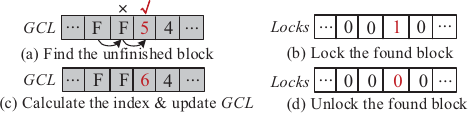}
    \vspace{-2mm}
    \caption{{Process of work stealing (F: \textit{0xFFFFFFFF}, red text: identified block)}}
    \label{worksteal}
    \vspace{-4mm}
\end{figure}

% \input{Scalability}
% \vspace{-1.5mm}
\section{Out-of-memory Setting}
\label{sec:outofmemory}
% \vspace{-1.5mm}

Given that global memory on GPU is considerably smaller than its CPU counterpart, typically ranging from a few gigabytes to several tens of gigabytes, the feasibility of holding the entire graph in global memory has been rendered impractical due to the swift proliferation of real-world graphs. 
With their sizes surpassing several tens of gigabytes and, in some cases, even reaching several hundred gigabytes, the necessity of segmenting large graphs into more manageable subgraphs that can fit within the confines of global memory has become a pressing imperative.
Despite certain literature having delved into graph partitioning~\cite{Karypis1998,chen2019powerlyra,andreev2004balanced}, their oversight in considering the properties of the bicliques has led to superfluous data transfer, which even dominates the overall performance. This occurs when required data is transferred on demand, and a certain portion of data is transferred multiple times~\cite{guo2020gpu}. 
To enhance graph partitioning for biclique counting on GPU, we introduce a biclique-aware partitioning solution named \textsf{BCPar}.

% {Reconsider the search for $(p,q)$-bicliques starting from $u$.} 
{We observe that since $C_L[l]$ is obtained through iterative intersections with $C_L[0]$, i.e., $N_2^q(u)$, the vertices in $C_L[l]$ necessarily belong to $N_2^q(u)$, i.e., $\forall u^\prime \in C_L[l], u^\prime \in N_2^q(u), 0\leq l \leq p-1$.}
% , the combination of $N_2^q(u)$. 
To construct $C_L[l]$, we intersect $C_L[l-1]$ with the 2-hop neighbors $N_2^q(u^\prime)$ of the newly added vertex $u^\prime$ in $L$. 
Hence, we only need to consider the 2-hop neighbors $N_2^q(u)$ of the starting vertex $u$ and 2-hop neighbors $N_2^q(u^\prime)$ of $u$'s 2-hop neighbors $u^\prime\in N_2^q(u)$ for computing $C_L$ of $u$. Similarly, $C_R$ is determined by the 1-hop neighbors of the starting vertex and its 2-hop neighbors.

Thereafter, to retrieve bicliques for $u$, we collectively group the 1-hop and 2-hop neighbors of $\{u\cup N_2^q(u)\}$, rather than loading them on demand.
{As the neighbors of a vertex may be shared by others, we aim for maximal sharing of the neighbors among vertices within the same partition to minimize partition sizes and the number of partitions.}
However, achieving the optimal partitioning result is NP-hard (Lemma~\ref{lemma:partition}). We propose a greedy solution, namely \textsf{BCPar}, to address the partition problem, as shown in Algorithm~\ref{alg:bcpartitioner}.

% Reconsider the search procedure of a $(p,q)$-biclique $B$ starting from vertex $u$, which we search by finding a $p$-clique $C$ in $H$ (suppose $H$ constructed on layer $L$) along with the intersection of each $N(u^\prime, G), u^\prime \in C$. Since $p$-clique $C$ is a complete graph containing $u$, $C$ must be the combination of $u$ and it's $(p-1)$ $1$-hop neighbors. Hence, we only have to consider at most $2$-hop neighbors of $u$ in $H$ to search a biclique, i.e., $1$-hop neighbors as the candidates to form a clique and $2$-hop neighbors to filter candidates. A straightforward idea is to load both the 1-hop and 2-hop neighbors of a vertex into device memory before calculating its biclique, rather than loading them on demand. Since a vertex and its neighbors may be shared by other vertices, we hope that the vertex and its neighbors can be shared by other vertices in the same partition as much as possible to reduce the number of partitions as well as data loading time. However, achieving the optimal partitioning result is NP-hard(Lemma~\ref{lemma:partition}). We propose a greedy solution to solve the partition problem, as shown in Algorithm~\ref{alg:bcpartitioner}.

% \vspace{-2mm}
\begin{lemma}
\label{lemma:partition}
    Obtaining the optimal graph partitioning result is an NP-hard problem.    
\end{lemma}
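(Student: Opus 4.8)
The plan is to establish NP-hardness by a reduction from a known NP-hard problem, most naturally one in the family of set-cover / clustering-style problems, since the partitioning objective here is to group the starting vertices into partitions so that the union of the 1-hop and 2-hop neighborhoods of $\{u \cup N_2^q(u)\}$ over each group is as small (and as few) as possible. First I would formalize the optimization problem precisely: given $G=(U,V,E)$ and $q$, and a memory budget $M$, decide whether the starting layer can be split into partitions each of whose combined neighborhood footprint fits in $M$, minimizing the number of partitions (or, in the decision version, asking whether $t$ partitions suffice). This is essentially a bin-packing / graph-partitioning hybrid where the ``size'' of a set of vertices is the cardinality of a union of associated neighbor sets, which is submodular rather than additive — so the hardness should come through cleanly.

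Second, I would choose the source problem. The two cleanest candidates are (i) \textsc{Set Cover} / \textsc{Minimum $k$-Union} (given a ground set and a family of subsets, pick $k$ subsets maximizing — or minimizing — the size of their union), which is known NP-hard, or (ii) the classical \textsc{Graph Partitioning} / \textsc{Minimum Bisection} problem already cited in the paper. I expect the \textsc{Minimum $k$-Union}-style reduction to be the most transparent: each starting vertex $u$ in the constructed instance is made to correspond to a subset $S_u$ of some ground set (realized as its 2-hop-expanded neighborhood), the construction of $G$ is arranged so that $N(\{u\cup N_2^q(u)\})$ equals exactly $S_u$ up to a controlled blow-up, and then a partition of bounded footprint translates to a grouping of subsets with bounded union. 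I would build the bipartite gadget so that the $q$-threshold condition in $N_2^q$ is trivially satisfied (e.g. pad every relevant pair with $q$ shared dummy vertices) so that $N_2^q$ behaves like ordinary 2-hop reachability and does not interfere with the reduction.

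Third, I would verify the two directions of the reduction: a low-cost partition of the graph yields a good solution to the source instance, and conversely. Because the footprint function is a union of sets, monotonicity makes the ``forward'' direction (partition $\Rightarrow$ cover/grouping) immediate, and the gadget's rigidity makes the ``backward'' direction (grouping $\Rightarrow$ partition) follow by construction. I would also confirm the reduction is polynomial-time: the gadget has size polynomial in the source instance since each subset contributes a bounded number of vertices and edges, and the $q$-padding adds only $O(q)$ dummies per pair, with $q$ treated as a fixed (or polynomially bounded) parameter.

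The main obstacle I anticipate is the interplay between the $N_2^q$ threshold and the neighborhood-union objective: I must engineer the bipartite gadget so that $N_2^q(u)$ in the constructed graph is exactly the set of ``elements'' I want associated with $u$ — no spurious 2-hop neighbors creeping in because some unintended pair happens to share $\ge q$ common neighbors, and no intended element dropping out because it shares fewer than $q$. Careful use of disjoint dummy-vertex padding for the intended pairs and deliberate sparsity elsewhere should control this, but it is the delicate part of the argument and is where I would spend the bulk of the proof. (A simpler fallback, if the full optimization turns out awkward, is to reduce to the decision version of balanced graph partitioning / minimum bisection, which the paper already cites as NP-hard, by taking $q$ small enough that $N_2^q$ reduces to ordinary 2-hop adjacency and the footprint objective degenerates to a cut-like quantity.)
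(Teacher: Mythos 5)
The paper does not carry out a reduction at all: its ``proof'' is a one-line appeal to the knapsack problem, the implicit argument being a restriction argument --- in the special case where the neighborhood footprints $|N(u^\prime)|+|N_2^q(u^\prime)|$ of different starting vertices are pairwise disjoint, the size of a partition is simply the \emph{sum} of the individual weights, and packing vertices into the fewest groups of total weight at most $M$ (or deciding whether a group can utilize the budget $M$ exactly) is literally bin packing / knapsack, which is NP-hard. Your route is genuinely different: you propose a full gadget reduction from \textsc{Minimum $k$-Union} / \textsc{Set Cover} (or minimum bisection as a fallback) in which the submodular union structure of the footprint is the source of hardness. That is a legitimate and in some ways stronger line of attack --- it would show hardness even when the non-additive, shared-neighbor structure is essential --- but it is far heavier than necessary, because the additive special case already suffices and sidesteps the entire difficulty you flag.

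And that flagged difficulty is where your proposal, as written, has a real gap: the construction forcing $N_2^q(u)$ in the built bipartite graph to coincide exactly with the intended element set $S_u$ --- no spurious pairs accidentally sharing $\ge q$ common neighbors, no intended pairs dropping below the threshold, and the padding vertices not distorting the footprint sizes that feed into the budget $M$ --- is precisely the crux of your reduction, and you only promise it rather than exhibit it. Dummy vertices added to enforce the $q$-threshold themselves enter $N(\cdot)$ and hence the cost, so their contribution must be accounted for uniformly, and sparsity ``elsewhere'' must be argued, not assumed. Until that gadget is pinned down and both directions of the reduction are verified against it, the argument is a plan rather than a proof. The quickest repair is to abandon the gadget machinery and argue by restriction as the paper intends: construct an instance with pairwise-disjoint neighborhoods realizing arbitrary prescribed weights (which is easy in a bipartite graph), observe that the partition cost becomes additive, and conclude NP-hardness from bin packing / knapsack directly.
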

% \vspace{-4mm}

% \vspace{-1mm}
\begin{proof}
\label{proof:partition}
    The proof of Lemma~\ref{lemma:partition} is straightforward with the knapsack problem and thus is omitted.
\end{proof}
% \vspace{-2mm}

\setlength{\textfloatsep}{0pt}
\begin{algorithm}[!t]
\caption{\textsf{BCPar}}
\label{alg:bcpartitioner}
\LinesNumbered
\linespread{0.9}\selectfont
\KwIn{vertex set $U$ of selected layer, neighbor structures $N(\cdot)$ and $N_2^q(\cdot)$, memory budget $M$ }
\KwOut{partition $P = \{ g_1,g_2,...,g_k \}$}

Compute weight $w(u) = |N(u)| + |N_2^q(u)|$, $u\in U$\;
Compute average weight $avg_w(u) = \frac{1}{|N_2^q(u)|} \sum_{u^\prime \in N_2^q(u)} w(u^\prime)$, $u\in U$\;
Initialize an array ${L}$ with $U$ in descending order\;
% Initialize priority query $\mathcal{Q}$ with $V(H)$\;
$P \leftarrow \emptyset$\;
\While{$U \neq \emptyset$}{
    $p \leftarrow \emptyset$, $p^\prime \leftarrow \emptyset$, ${Q} \leftarrow \emptyset$, $cost \leftarrow 0$\;
    $u \leftarrow {L}$.pop(), $p$.insert($u$)\;
    
     \ForEach{$u^\prime \in N_2^q(u)\cup u$}{
        $p^\prime$.insert($u^\prime$)\;
        $cost+=(|N(u^\prime)|+|N_2^q(u^\prime)|)$\;
     }
     
    \While{ True }{
        \ForEach{$u^\prime \in p^\prime$}{
            \ForEach{$v \in N_{in} (u^\prime$)}{
                \If{$v \notin {Q}$}{${Q}$.insert($v$)\;}
                Increase the weight of $v$ in ${Q}$ by $w(v)$\;
            }
        }
        $u \leftarrow {Q}$.pop()\;
        \ForEach{$u^\prime \in N_2^q(u)\cup u$}{
            \If{$u^\prime \notin p^\prime$}{
            $p^\prime$.insert($u^\prime$)\;$cost+=(|N(u^\prime)|+|N_2^q(u^\prime)|)$\;}
        }
        \uIf{$cost > M$}{$P$.append($p$),\textbf{break}\;
        }
        \lElse{$p$.insert($u$), $U$.delete($u$)}
    }
    
}
\Return{$P$}\;
\end{algorithm}
\setlength{\textfloatsep}{0pt plus 0pt minus 0pt}

\textsf{BCPar} takes as inputs the vertex set $U$ of the selected layer, neighbor structure $N(\cdot)$ and $N_2^q(\cdot)$, the memory budget $M$, and returns the partitioning result $P$. 
The algorithm first computes the weight $w$ and the average weight $avg_w$ for each vertex in $U$ (lines 1--2). 
Then, the algorithm sorts the vertices in $U$ in descending order and stores them in array $L$ (line 3). 
% The variable $P$ is used to store the partitioning result which is initialized to be empty (line 4). 

\textsf{BCPar} leverages a while-loop to partition the vertices (lines 5--22). 
In each iteration, four variables are initialized: $p$ for storing the vertices in the current partition, $p^\prime$ for recording the vertices in $p$ and their 2-hop neighbors, $Q$ for storing the in-neighbors of vertices in $p^\prime$, and $cost$ for tracking partition size (line 6). 
The $cost$ of partition $p$ is the total number of the vertices in $p^\prime$ and the corresponding 1-hop and 2-hop neighbors. 
{Subsequently, \textsf{BCPar} prioritizes the vertex $u$ with the maximal average weight to initialize $p$ (line 7). The rationale behind this is that if the neighbors of the vertex with the maximal average weight can be shared by other vertices, there exists a heightened likelihood of achieving better compression gains.}
\textsf{BCPar} uses a hash table $p^\prime$ to record $u$ and its 2-hop neighbors, subsequently adjusting the cost after the insertion of $u$ into the current partition (lines 8--10). 
{The algorithm iteratively appends vertices to the ongoing partition $p$ (lines 11--23).} 
For each vertex $u^\prime$ in $p^\prime$, the algorithm increments the weight of their corresponding in-neighbors by $w(u^\prime)$, signifying that if vertex $v$ is inserted, the cost is reduced by $w(u^\prime)$ (lines 12--16).
$Q$ is established as a max-heap consisting of the in-neighbors of vertices in $p^\prime$. {The algorithm then selects the vertex with maximal weight in $Q$ as a candidate for the current partition $p$ (line 17). }
The reason behind this is that inserting the vertex with the maximum weight can reduce the partition size to the greatest extent possible.
Subsequently, \textsf{BCPar} checks if inserting $u^\prime$ satisfies the memory requirement $M$. If not, the ongoing partition $p$ is appended to the result set $P$, concluding the current search. Otherwise, $u$ is inserted into $p$, and the exploration continues.
% Subsequently, the algorithm verifies whether inserting $u^\prime$ complies with the memory requirement $M$. 
% If the criterion is not met, the ongoing partition $p$ is appended to the result set $P$, concluding the current search. Otherwise, $u$ is inserted into $p$, and the algorithm continues the exploration.

\textsf{BCPar} holds two primary advantages. 
Firstly, the sharing of multiple neighbors among vertices within the same partition significantly reduces the partition size. 
Secondly, it enhances load balancing, leading to improved performance.

% Given the device memory budget $M$ for holding the input graph $G$, we aim to partition $G$ into $k$ small subgraphs $P = \{ g_1,g_2,...,g_k \}$, such that $g_1 \cap g_2 \cap ... \cap g_k = G$ and $g_k \leq M$. In order to reduce the cost of data transfer, we resort to minimizing the following object function:

% \begin{equation}
%     cost(P) = \sum_k cost(g_k), \quad s.t. \;cost(g_k) \leq M \; and\; g_k \in P
% \end{equation}

% \input{Method} 
% \vspace{-2mm}
\section{Evaluation}
\label{sec:evaluation}
% \vspace{-2mm}

\begin{table}[tbp]
    \vspace{1mm}
    \begin{center}   
        \caption{{Detailed descriptions of datasets.}}  
        \setlength{\tabcolsep}{2.4pt}
        % \vspace{-2mm}
        \label{datasetdetail} 
        % \begin{tabular}{|p{2.3cm}<{\centering}||p{0.9cm}<{\centering}|p{0.9cm}<{\centering}|p{1.2cm}<{\centering}|p{0.45cm}<{\centering}|p{0.45cm}<{\centering}|}   
        \begin{tabular}{|c||c|c|c|c|c|}
        \hline   \textbf{Datasets} & $|$\textbf{\textit{U}}$|$ & $|$\textbf{\textit{V}}$|$ &  $|$\textbf{\textit{E}}$|$ & {$\bar{d}_U$} & {$\bar{d}_V$}     \\ \hline
        \hline   Youtube (\textit{YT}) & 94,238 & 30,087 & 293,360 & {3.11} & {9.75}\\ 
        \hline   Bookcrossing (\textit{BC}) & 77,802 & 185,955 & 433,652 & {5.57} & {2.33} \\
        \hline   Github (\textit{GH}) & 56,519 & 120,867 & 440,237 & {7.79} & {3.64}\\
        \hline   StackOverflow (\textit{SO}) & 545,196 & 96,680 & 1,301,942 & {2.39} & {13.47} \\
        \hline   Yelp (\textit{YL}) & 31,668 & 38,048 & 1,561,406 & {49.31} & {41.04}\\
        \hline   IMDB (\textit{ID}) & 303,617 & 896,302 & 3,782,463 & {12.46} & {4.22}\\
        \hline   Lastfm (\textit{LF}) & 359,349 & 160,168 & 17,559,162 & {48.86} & {109.63}\\
        \hline   Edit\_fr (\textit{FR}) & 16,874 & 3,416,271 & 23,443,737 & {1,389.34} & {6.86}\\
        \hline   Orkut (\textit{OR}) & 2,783,196 & 8,730,857 & 327,037,487 & {117.50} & {37.45}\\
        \hline   Synthetic 1 (\textit{S1}) & 6,720 & 5,300 & 207,146 & {30.83} & {39.08} \\
        \hline   Synthetic 2 (\textit{S2}) & 12,720 & 11,100 & 220,651 & {17.35} & {19.88} \\
        \hline   
        \end{tabular}   
    \end{center}
    % \vspace{-1mm}
\end{table}
 
In this section, we evaluate the performance of \textsf{GBC} and conduct a comparative evaluation with the state-of-the-art $(p,q)$-biclique counting algorithms. 

% \vspace{-2.5mm}
\subsection{Experimental Setup}
% \vspace{-1mm}

\textbf{Evaluated Methods.} We compare \textsf{GBC} with three algorithms: (1) \textsf{GBL}: the GPU baseline designed in~\S~\ref{gpubaseline}; (2) \textsf{BCL}: the state-of-the-art CPU solution~\cite{yang2021p}; (3) \textsf{BCLP}: the parallelized \textsf{BCL} on CPU~\cite{yang2021p}. 

\textbf{Platform.} All experiments are conducted on an Ubuntu 20.04.2 server, featuring an Intel Core i9-10900K 3.70GHz CPU and an Nvidia GeForce RTX 3090 GPU. The GPU has 82 SMs and $10,496$ cores. We implement \textsf{GBC} in C++ under Nvidia CUDA 11.2. \textsf{BCLP} is executed with 16 threads.

\textbf{Datasets.} Table~\ref{datasetdetail} summarizes the 11 datasets used for experiments. 
The first 9 datasets\footnote{http://konect.cc/networks} are real datasets with 8 of them extensively employed in related research~\cite{sanei2018butterfly, yang2021p}. The large dataset Orkut (\textit{OR}) is prepared for evaluating scalability where data cannot entirely fit into the global memory. 
Besides, we generate 2 synthetic datasets ($S1$ and $S2$) with more 2-hop neighbors than those in the real datasets, resulting in increased computational load and uneven workloads.
{The synthetic datasets are generated as follows:
(1) fix the size of $U$ and $V$, 
(2) determine the number of 2-hop neighbors for layer $U$ according to the power-law distribution, then artificially adjust it to be slightly larger than that of the real datasets used.
(3) randomly select neighbors from $V$ for $U$ based on the generated number of 2-hop neighbors.}
% The detailed descriptions of the datasets are summarized in Table~\ref{datasetdetail}. 

\textbf{Queries.} We set $(p+q)$ in the range of 8 to 24, with 16 serving as the default value.

\begin{figure*}[tb]
    \centering
    \includegraphics[width=0.35\textwidth]{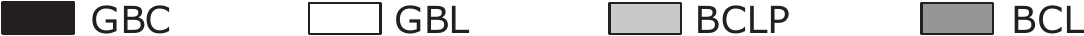}\\
    % \vspace{-2mm}
    
    \hspace{-2mm}
    \subfigure{
    \includegraphics[width=0.191\textwidth]{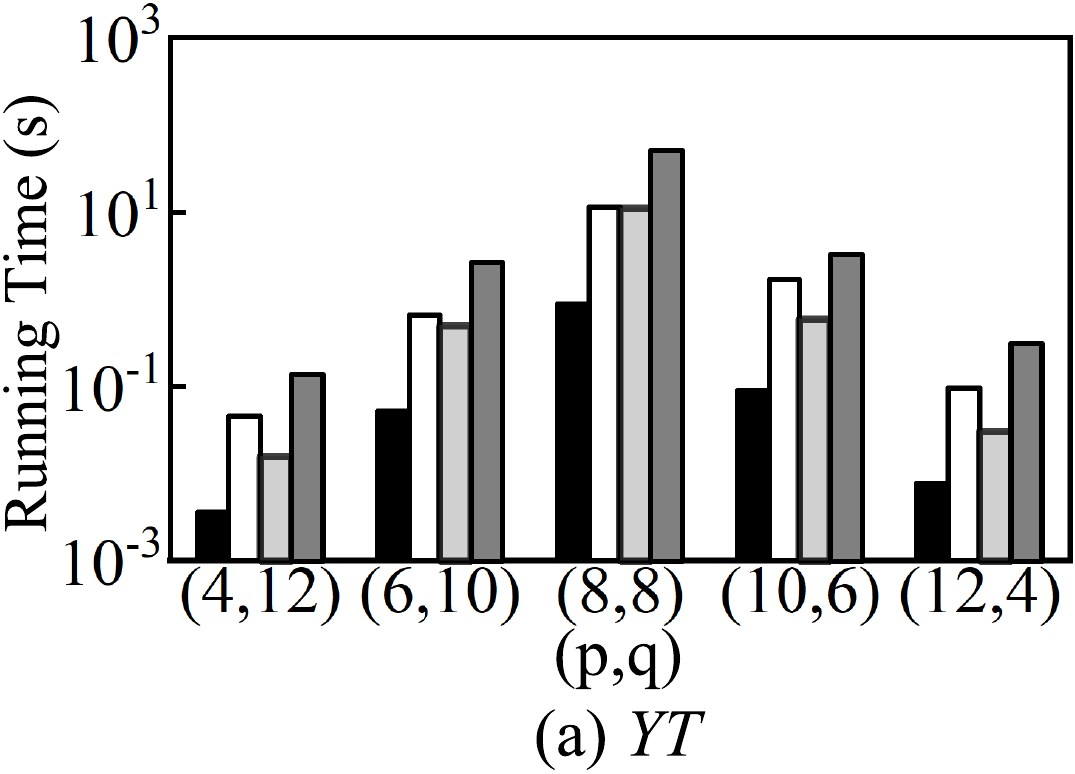}}  
    \hspace{-2.3mm}
    \subfigure{
    \includegraphics[width=0.191\textwidth]{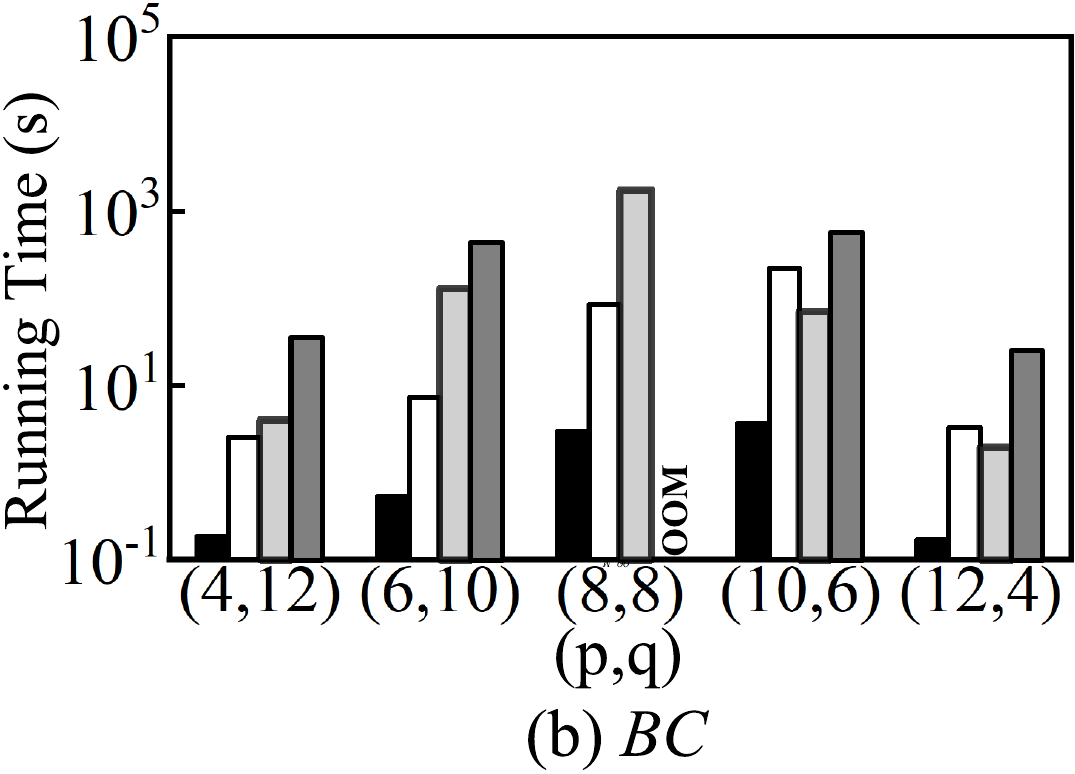}}  
    \hspace{-2.3mm}
    \subfigure{
    \includegraphics[width=0.191\textwidth]{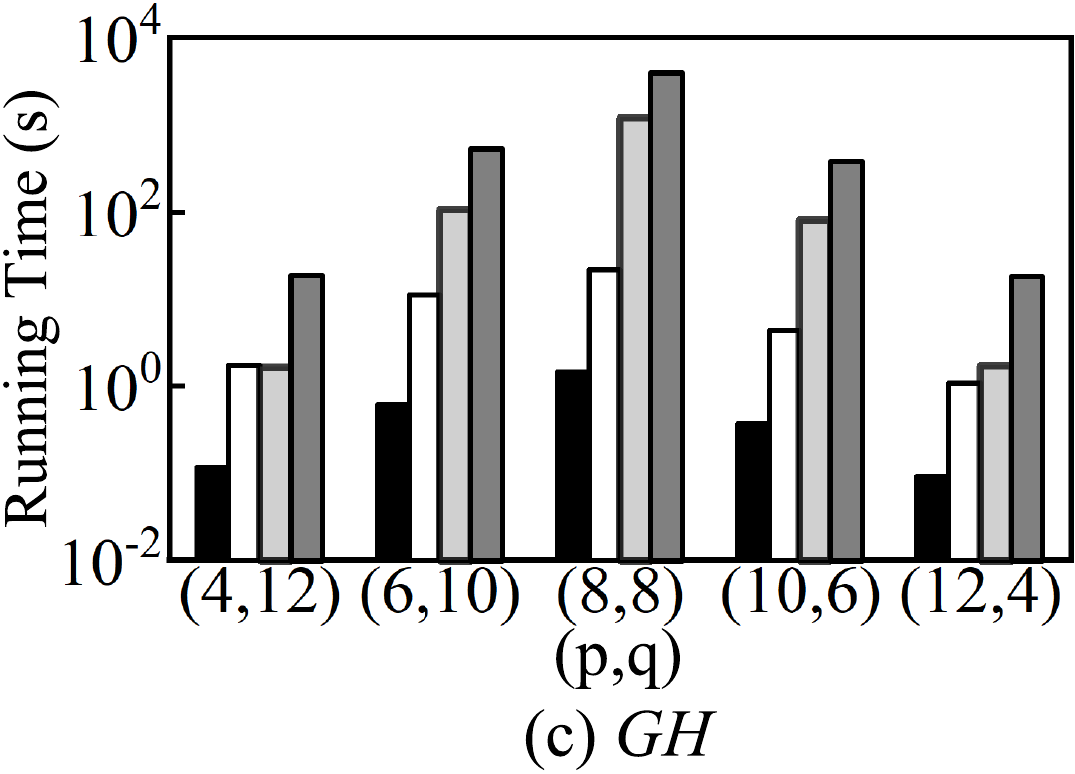}}  
    \hspace{-2.3mm}
    \subfigure{
    \includegraphics[width=0.191\textwidth]{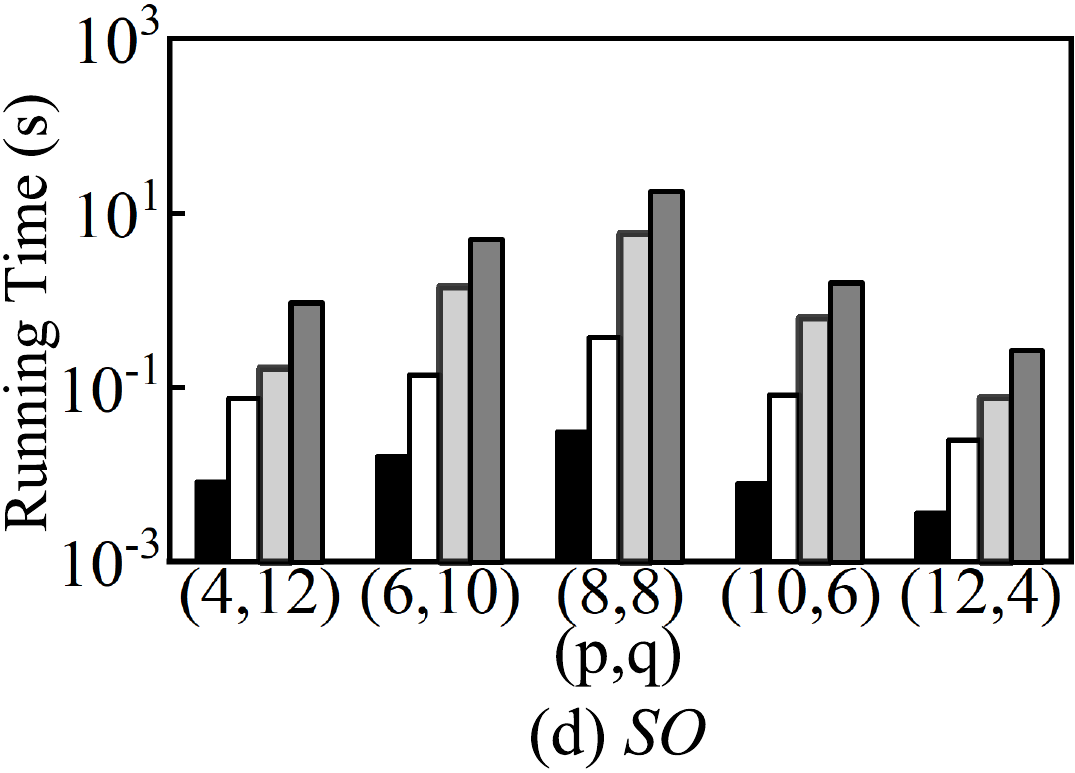}}
    \hspace{-2.3mm}
    \subfigure{
    \includegraphics[width=0.191\textwidth]{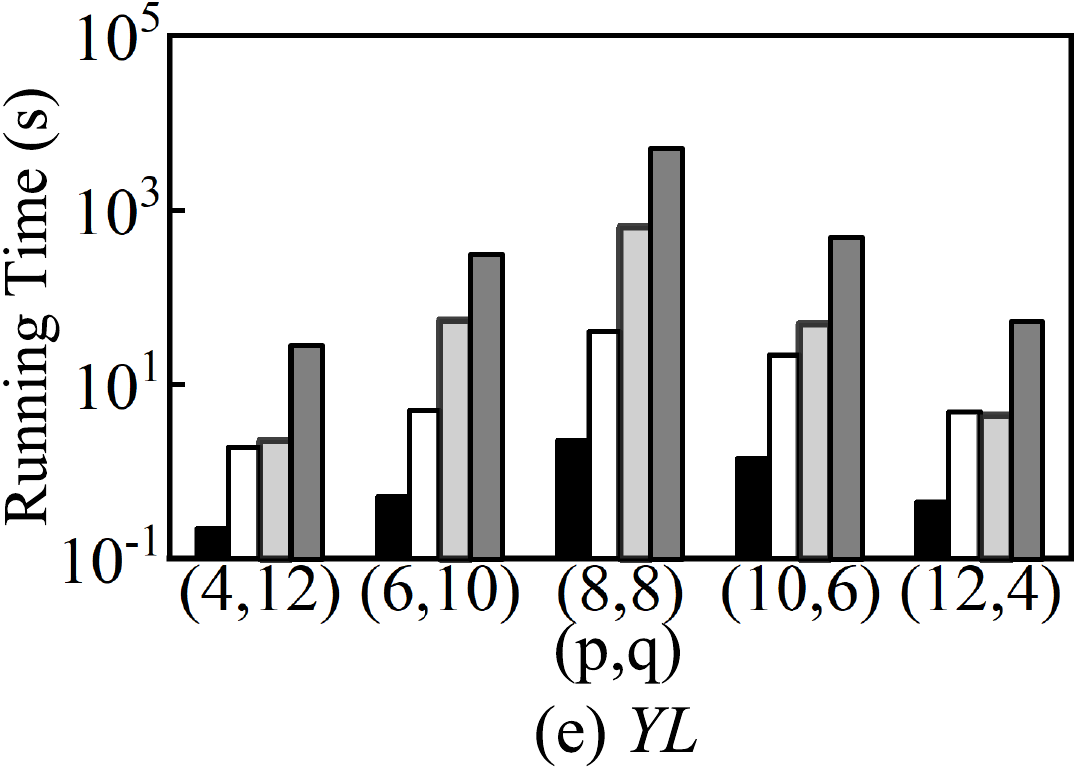}}
    \vspace{-2mm}
    
    \hspace{-3mm}
    \subfigure{
    \includegraphics[width=0.191\textwidth]{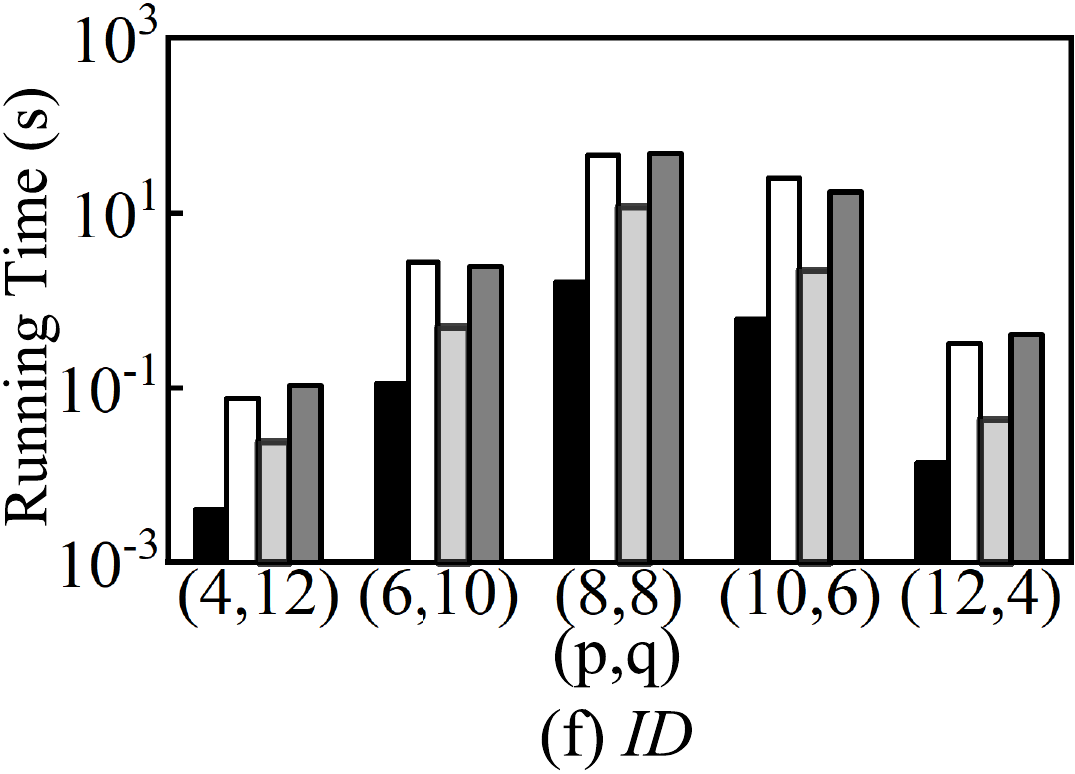}}  
    \hspace{-2.3mm}
    \subfigure{
    \includegraphics[width=0.191\textwidth]{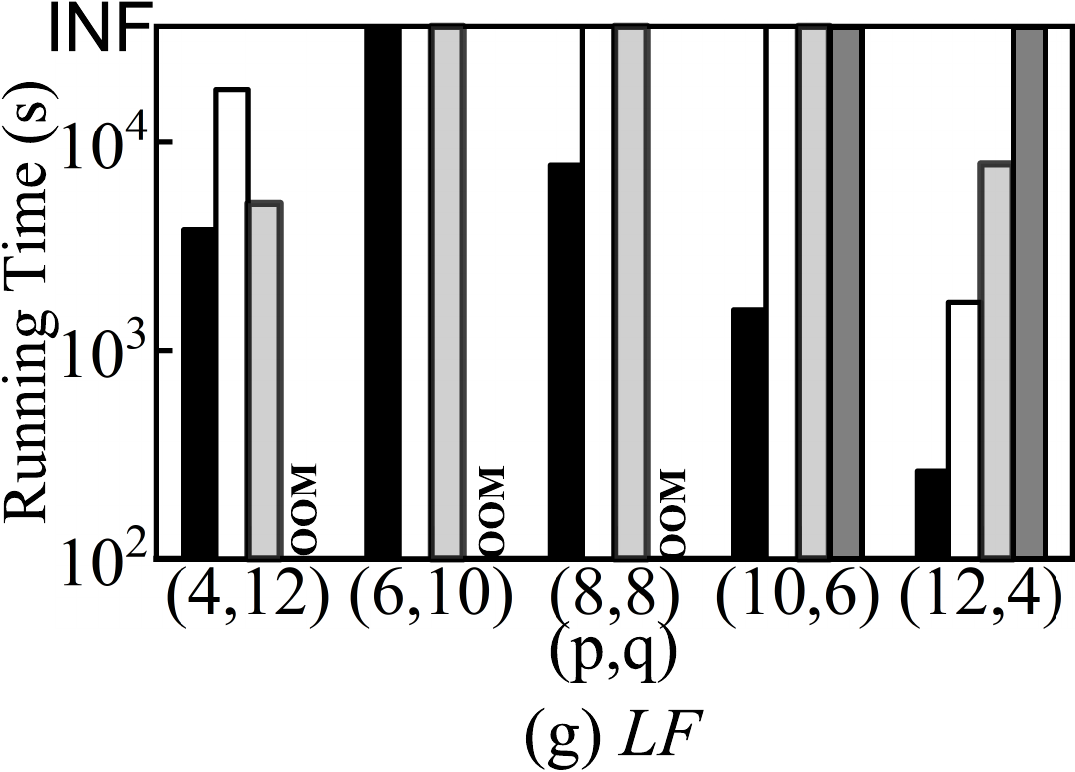}}  
    \hspace{-2.3mm}
    \subfigure{
    \includegraphics[width=0.191\textwidth]{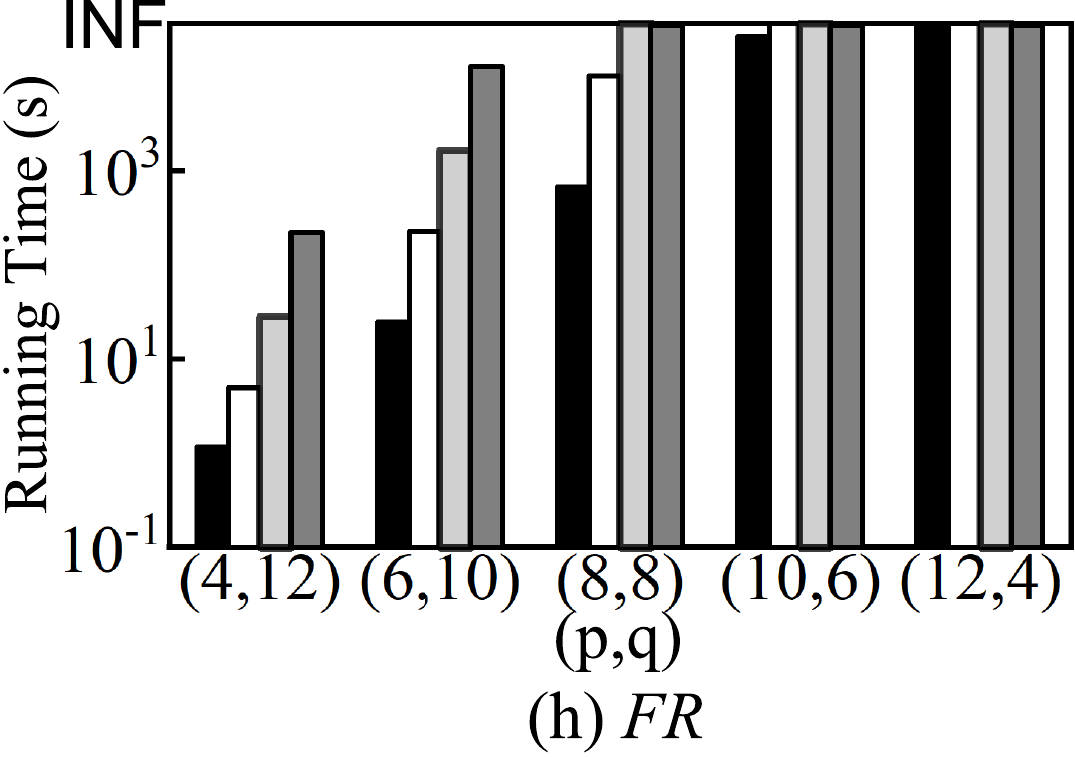}}  
    \hspace{-2.3mm}
    \subfigure{
    \includegraphics[width=0.191\textwidth]{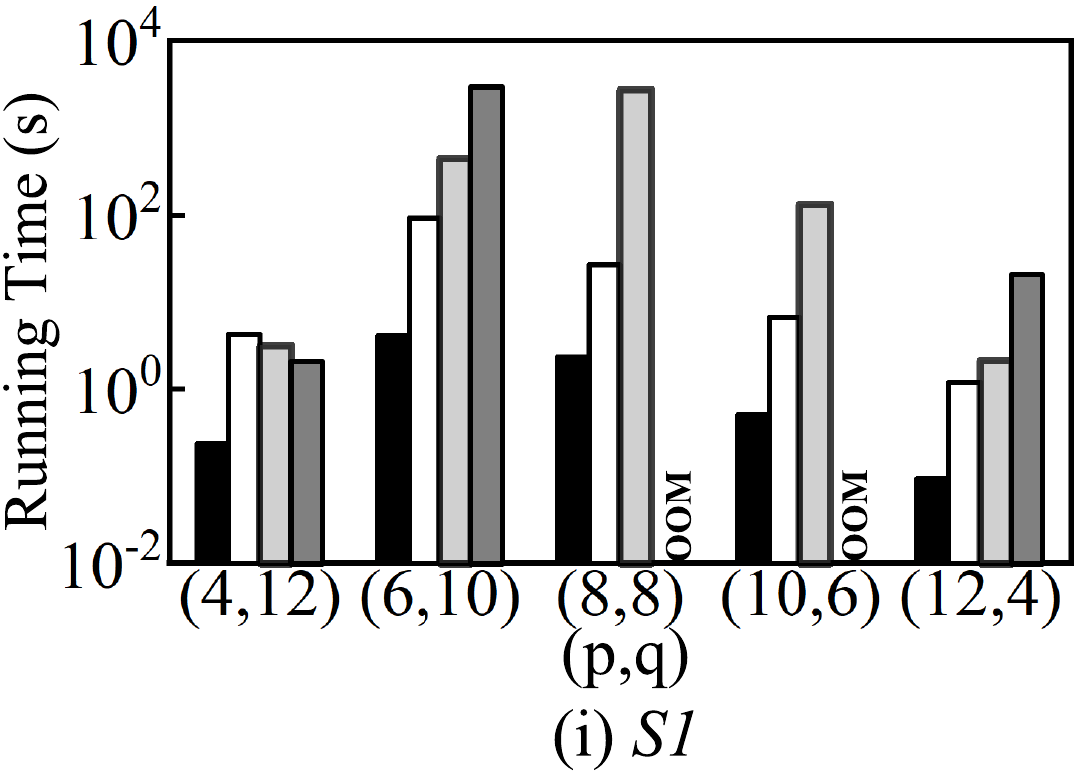}}
    \hspace{-2.3mm}
    \subfigure{
    \includegraphics[width=0.191\textwidth]{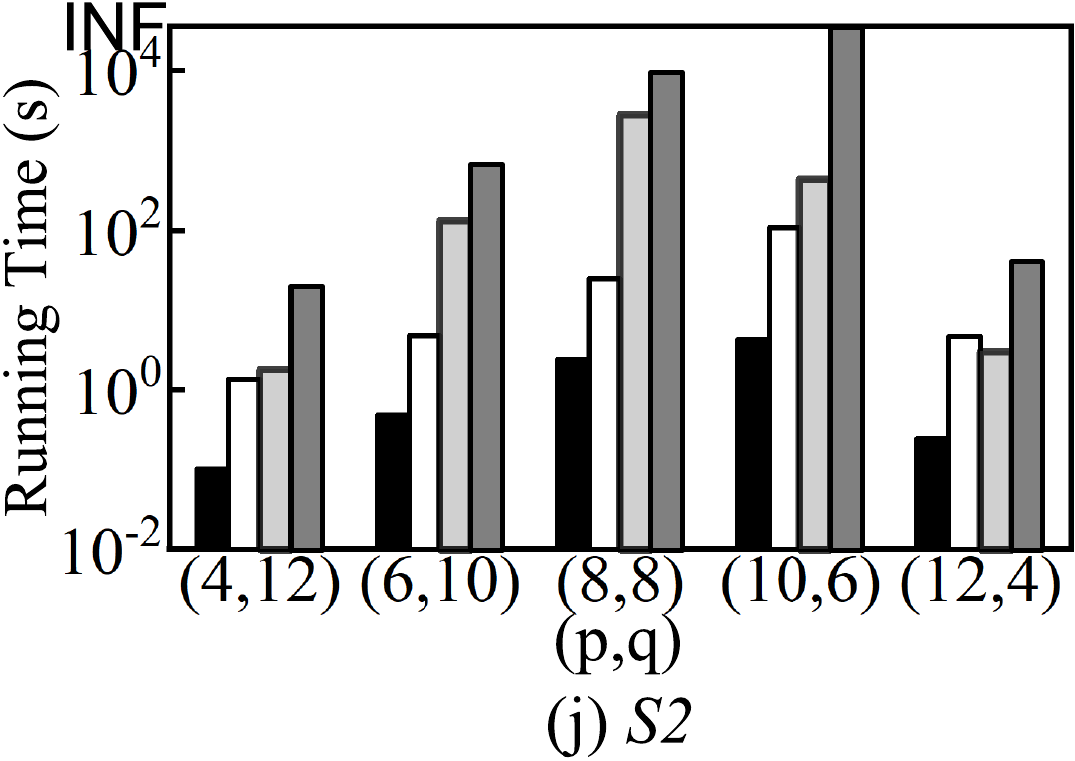}}
    
    % \hspace{-5mm}
    \vspace{-4mm}
    \caption{Overall performance in comparison to baselines (INF: running times beyond 10 hours; OOM: out of memory).}
    \vspace{-2mm}
    \label{overallperform}
\end{figure*}

% \vspace{-3mm}
\subsection{Overall Performance}
\label{sec:overallperformance}
\vspace{-2mm}

Figure~\ref{overallperform} illustrates the performance of different methods. In this context, methods exceeding a runtime of 10 hours are terminated and their values are set to INF.

First, \textsf{GBC} significantly outperforms the compared methods in all cases. 
On average, \textsf{GBC} achieves speed enhancements of 505.3$\times$ over \textsf{BCL}, 146.7$\times$ over \textsf{BCLP} and 15.7$\times$ over \textsf{GBL}. The considerable improvement of \textsf{GBC} stems from the full exploitation of GPU's computing resources.
In the best cases, \textsf{GBC} achieves a remarkable 
% 2782.8$\times$ speedup over \textsf{BCL} ($p = q = 8$ on \textit{GH}), an 
836.7$\times$ acceleration over \textsf{BCLP} ($p = q = 8$ on \textit{GH}) 
% and a 59.6$\times$ speedup over \textsf{GBL} ($p = 10, q = 6$ on \textit{BC}) 
on real datasets. Meanwhile, on synthetic datasets, \textsf{GBC} realizes an extraordinary 
% 4163.8$\times$ speedup over \textsf{BCL} ($p=q=8$ on \textit{S2}), a 
1217.6$\times$ acceleration over \textsf{BCLP} ($p=q=8$ on \textit{S2}).
% and a 25.8$\times$ speedup over \textsf{GBL} ($p=6,q=10$ on \textit{S2}) on synthetic datasets. 
Compared to real datasets, \textsf{GBC} exhibits superior performance gain on synthetic datasets, owing to their elevated computational overheads. By harnessing the considerable parallelism of GPUs, \textsf{GBC} adeptly tackles demanding computations with notable efficacy.

Second, \textsf{GBL}'s performance sometimes lags behind that of \textsf{BCLP} (e.g., on \textit{TY} and \textit{ID}). This discrepancy arises from \textsf{GBL} being a na\"ive migration of the basic model to the GPU platform, without advanced optimizations tailored for the GPU architecture. In contrast, through the incorporation of multiple innovative designs customized for GPUs, \textsf{GBC} attains optimal performances across all scenarios.
The notable progress demonstrated by \textsf{GBC} underscores the importance of carefully adapting algorithms to harness the unique capabilities and parallel processing power of GPUs.

Third, varying parameters $p$ and $q$ yield disparate execution times. 
Generally, when either $p$ or $q$ is small, the depth of the search tree remains correspondingly shallow, leading to lower computation costs.
Another noteworthy observation is that the speedup ratio of \textsf{GBC} tends to decrease as the disparity between the parameters increases.
{This is because the proposed techniques exhibit greater effectiveness with higher search trees, implying more computational workloads.}
Deeper levels in the high search tree result in smaller sizes of $C_R$ and $C_L$, along with fewer rounds of intersection operations, thus making the optimizations considerably more impactful. 
Furthermore, workload imbalance is less pronounced when the height of the search trees is small.
% \textcolor{red}{And large $p$ or $q$ results in longer adjacency lists which contributes to denser representation for hierarchical truncated bitmap. }
% As a consequence, the speedup ratio of $\mathsf{GBC}$ will decrease when the parameter decreases, as the algorithm's optimization mechanisms are less effective in scenarios with shallower search trees and lighter workloads. 

Fourth, the improvement of \textsf{GBC} appears conservative on some datasets. 
For instance, \textsf{GBC} achieves a maximum speedup of only 7.4$\times$ faster than \textsf{BCLP} on \textit{ID}. {As observed in Figure~\ref{overallperform}, the running time of \textsf{BCLP} is relatively brief among those datasets with lower speedup ratios of \textsf{GBC}. 
This suggests that the CPU suffices to count $(p,q)$-bicliques in these datasets within a relatively short duration, thereby diminishing the necessity to offload the computation to the GPU. 
In contrast, datasets necessitating prolonged processing time on the CPU are more suited for GPU acceleration. The high parallelism offered by the GPU proves particularly advantageous for datasets with substantial computational demands, substantially reducing processing time.}

% \vspace{-2mm}
\subsection{Scalability Evaluation}
% \vspace{-2mm}

We proceed to evaluate the scalability concerning different methods by varying the biclique size, i.e., the value of $(p+q)$, from 8 to 24 with an increment of 4, where $p=q=(p+q)/2$. Figure~\ref{scalabilityfigure} shows the results on five representative datasets.

The first observation is that \textsf{GBC} consistently outperforms the compared methods across all test cases, exhibiting a substantial improvement ranging from 2.4$\times$ to 6298.1$\times$. The performance improvement of \textsf{GBC} becomes more notable with an increase in computational workload, as exemplified in datasets \textit{BC} and \textit{GH}.
The performance enhancement stems from \textsf{GBC}'s capability to fully unleash the computing resources of the GPU, which is specifically suitable for intensive computing.
The second observation is that the running times of CPU solutions increase initially and then decrease with the growth of biclique size, whereas GPU methods generally exhibit comparable changes or continuous decreases as the queries grow in size.
This is because \textsf{GBC} effectively balances the workload of parallel threads, preventing excessively long-running threads from becoming bottlenecks.

% varying the biclique size results in disparate running times. 
% The second observation is that \textsf{GBC} consistently outperforms the compared methods across all test cases, exhibiting a substantial improvement ranging from 2.4$\times$ to 6298.1$\times$. The performance improvement of \textsf{GBC} becomes more notable with an increase in computational workload, as exemplified in datasets \textit{BC} and \textit{GH}.
% Particularly, \textsf{GBC} achieves an average speedup of 405.2$\times$ over \textsf{BCLP} and an exceptional 6298.1$\times$ speedup on \textit{BC} when $p+q=20$. 
% Detailed observation of Figure~\ref{scalabilityfigure} reveals that the running time of \textsf{BCLP} in \textit{BC} and \textit{GH} is notably longer when $p+q$ equals 16 and 20. In these instances, the algorithms need to search a larger space, resulting in higher time complexity. In contrast, \textsf{GBC} effectively harnesses the advantages of GPU in intensive computing, yielding superior performance. 

% \textcolor{red}{[LS: more details?]}

\begin{figure*}[tb]
    \centering
    \includegraphics[width=0.35\textwidth]{figures/legend.pdf}\\
    % \vspace{-2mm}
    
    \hspace{-2mm}
    \subfigure{
    \includegraphics[width=0.191\textwidth]{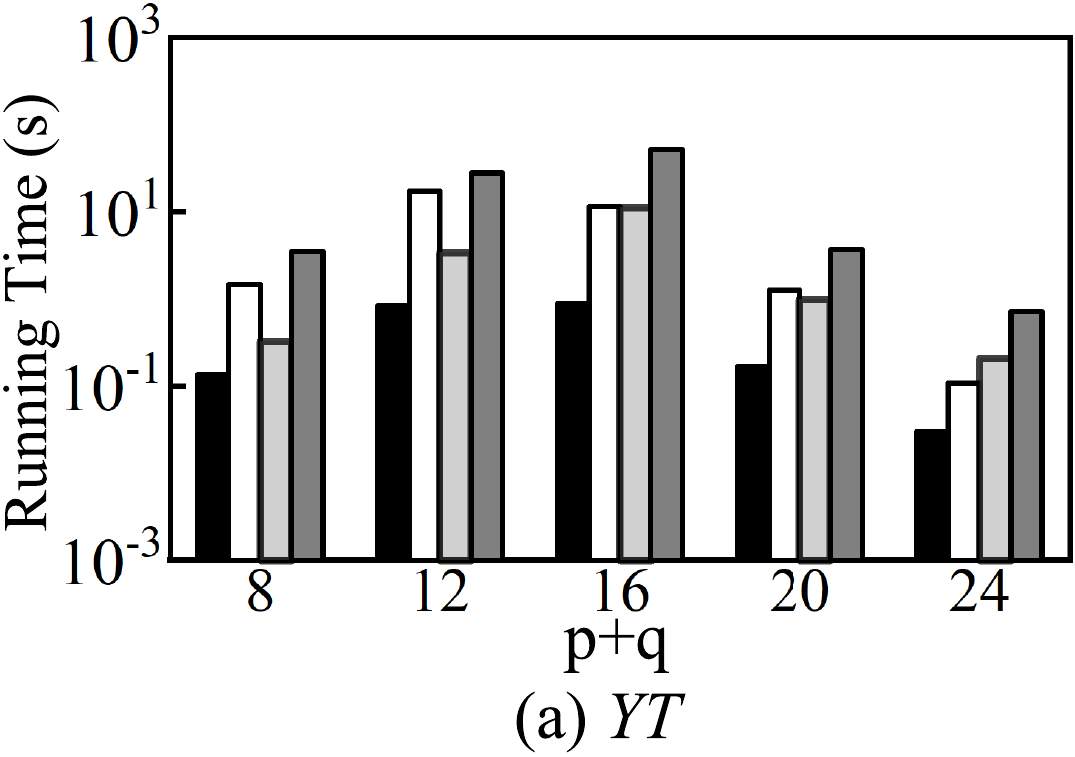}}  
    \hspace{-2.3mm}
    \subfigure{
    \includegraphics[width=0.191\textwidth]{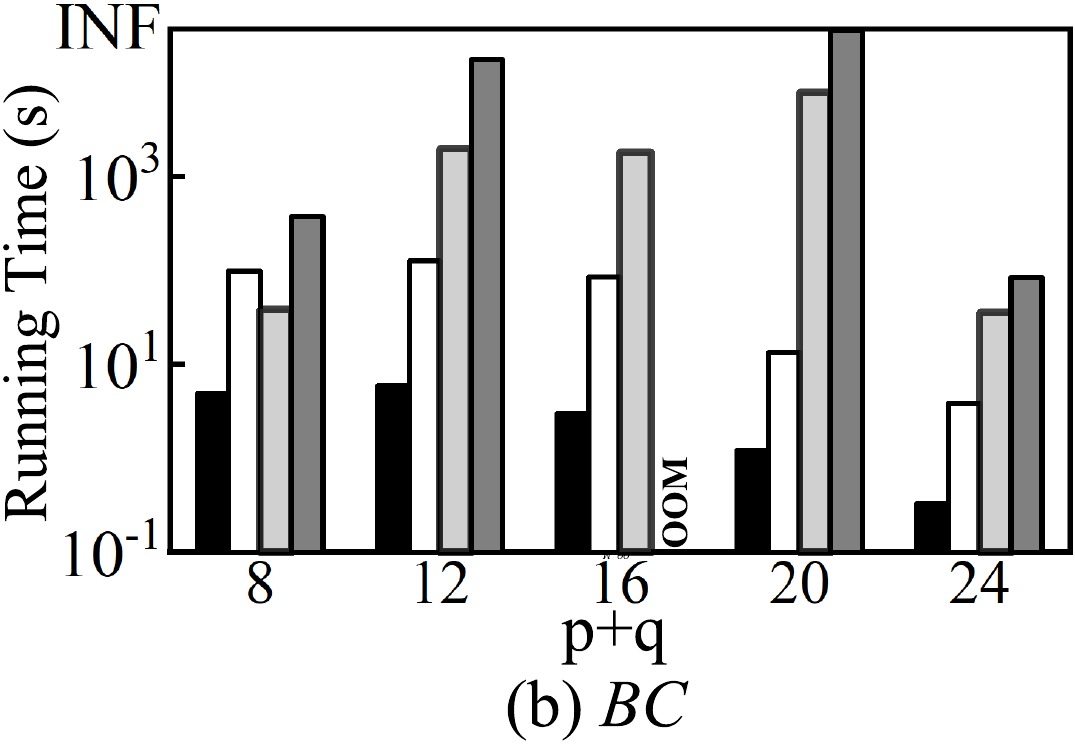}}  
    \hspace{-2.3mm}
    \subfigure{
    \includegraphics[width=0.191\textwidth]{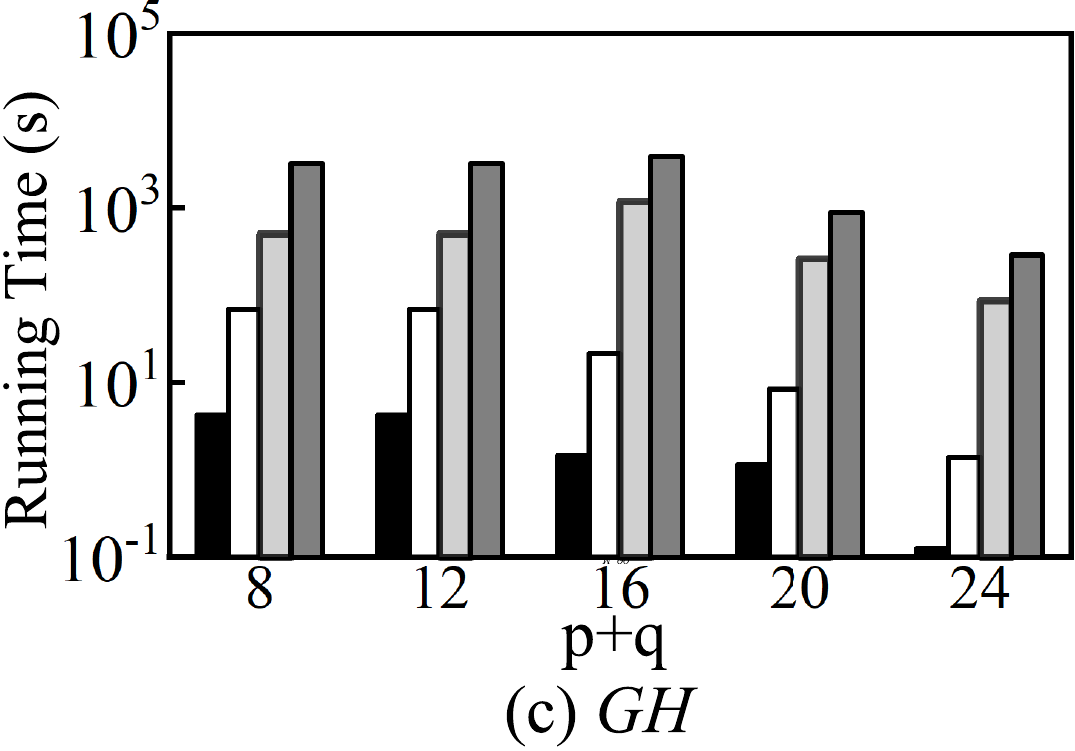}}  
    \hspace{-2.3mm}
    \subfigure{
    \includegraphics[width=0.191\textwidth]{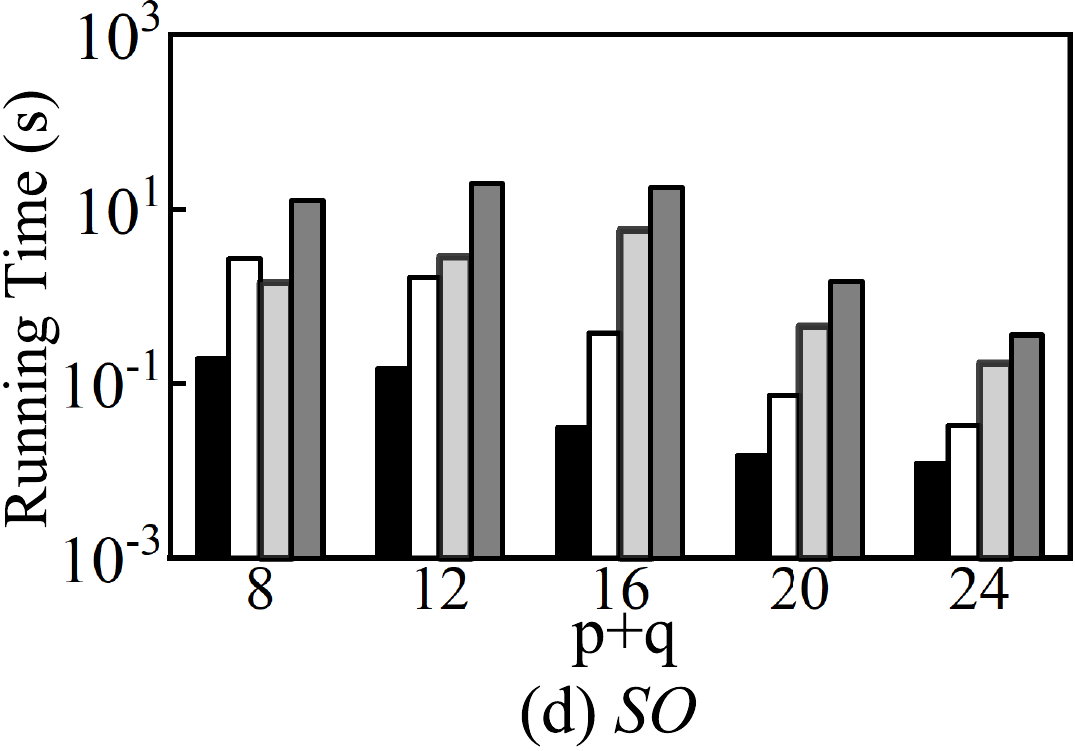}}
    \hspace{-2.3mm}
    \subfigure{
    \includegraphics[width=0.191\textwidth]{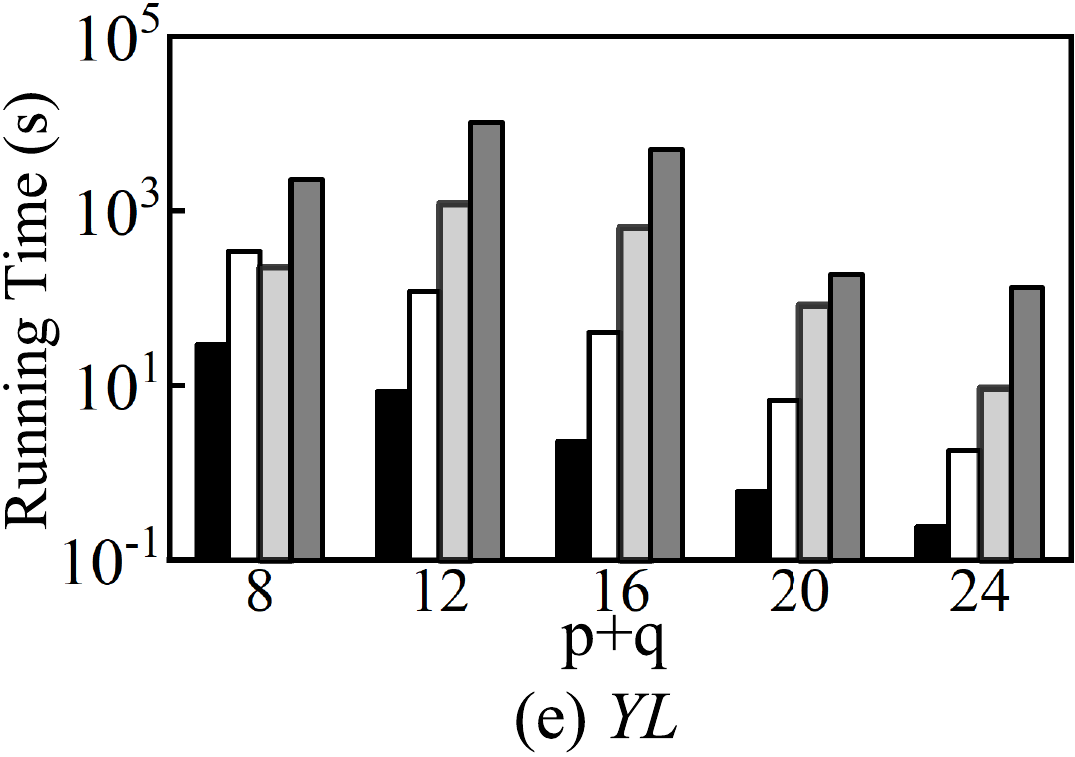}}
    % \vspace{-3mm}
    
    % \hspace{-2mm}
    % \subfigure[\textit{ID}]{
    % \includegraphics[width=0.208\textwidth]{figures/imdbablationh.pdf}}  
    % \hspace{-2.3mm}
    % \subfigure[\textit{LF}]{
    % \includegraphics[width=0.191\textwidth]{figures/lastfmablationh.pdf}}  
    % \hspace{-2.3mm}
    % \subfigure[\textit{FR}]{
    % \includegraphics[width=0.191\textwidth]{figures/editfrablationh.pdf}}  
    % \hspace{-2.3mm}
    % \subfigure[\textit{S1}]{
    % \includegraphics[width=0.191\textwidth]{figures/githubnew2ablationh.pdf}}
    % \hspace{-2.3mm}
    % \subfigure[\textit{S2}]{
    % \includegraphics[width=0.191\textwidth]{figures/generate6ablationh.pdf}}
    
    % \hspace{-5mm}
    \vspace{-4mm}
    \caption{Scalability evaluation vs. query size $(p+q)$ ($p=q=(p+q)/2$).}
    \vspace{-2mm}
    \label{scalabilityfigure}
\end{figure*}

\begin{figure*}[tb]
    \centering
    % \includegraphics[width=0.35\textwidth]{figures/le.pdf}\\
    % \vspace{-2mm}
    
    \hspace{-2mm}
    \subfigure{
    \includegraphics[width=0.191\textwidth]{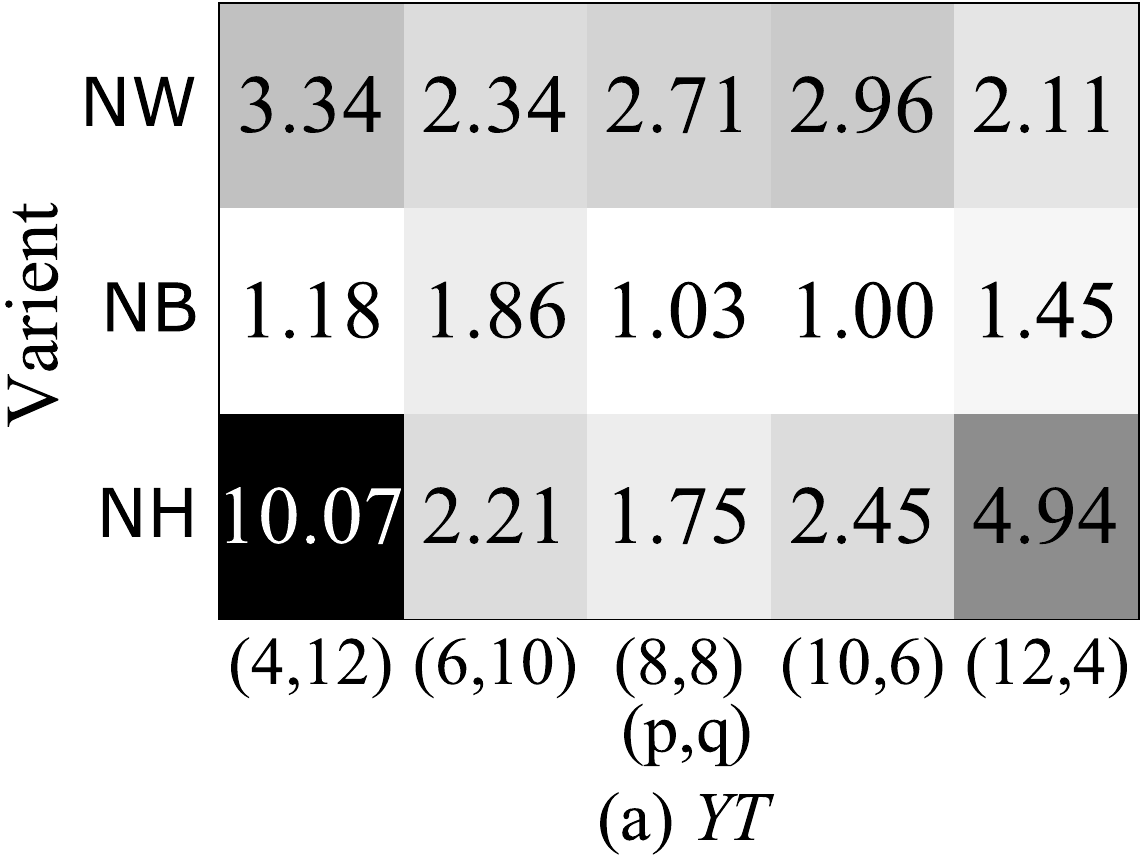}}  
    \hspace{-2.3mm}
    \subfigure{
    \includegraphics[width=0.191\textwidth]{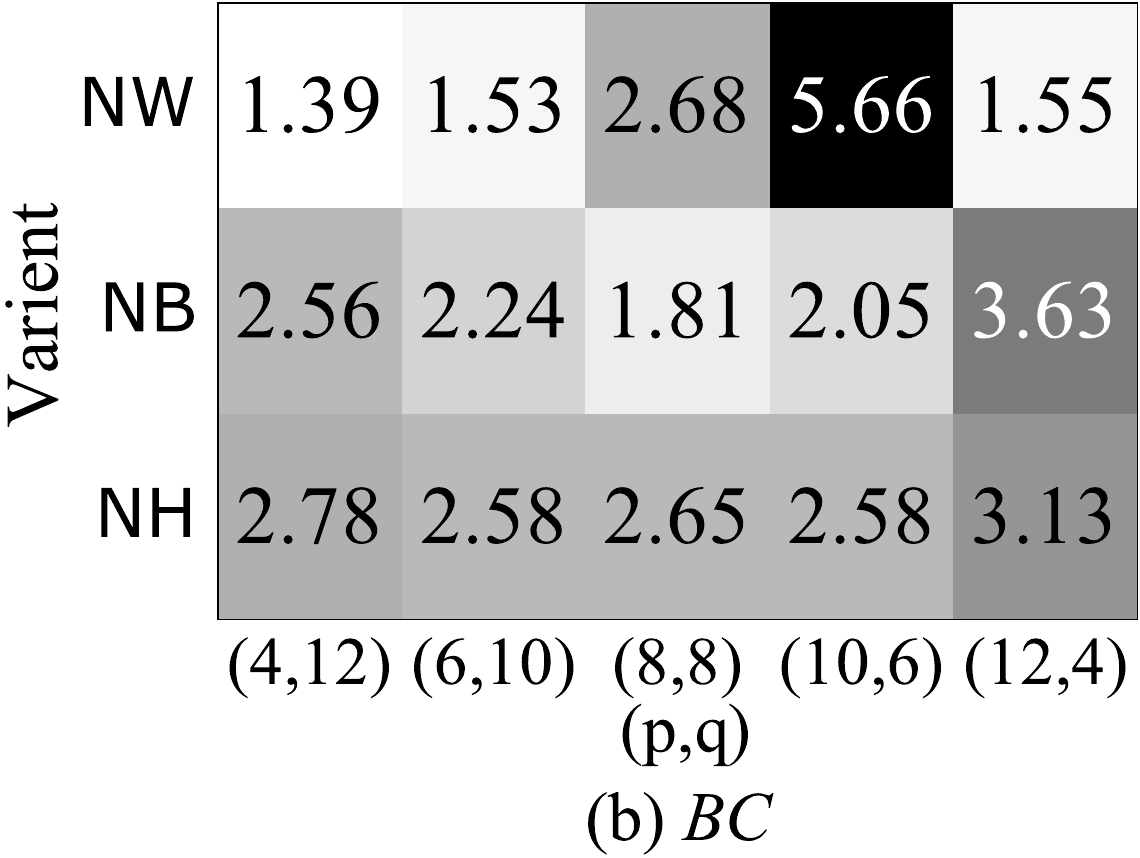}}  
    \hspace{-2.3mm}
    \subfigure{
    \includegraphics[width=0.191\textwidth]{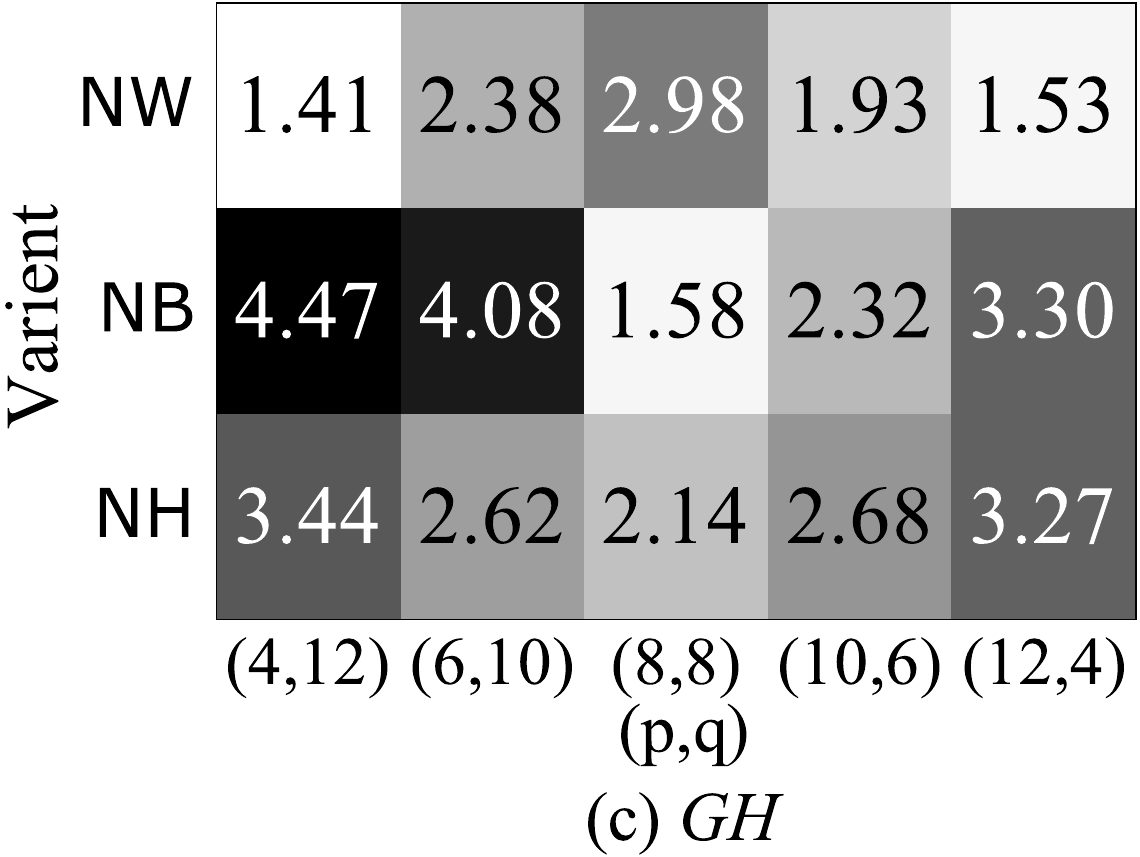}}  
    \hspace{-2.3mm}
    \subfigure{
    \includegraphics[width=0.191\textwidth]{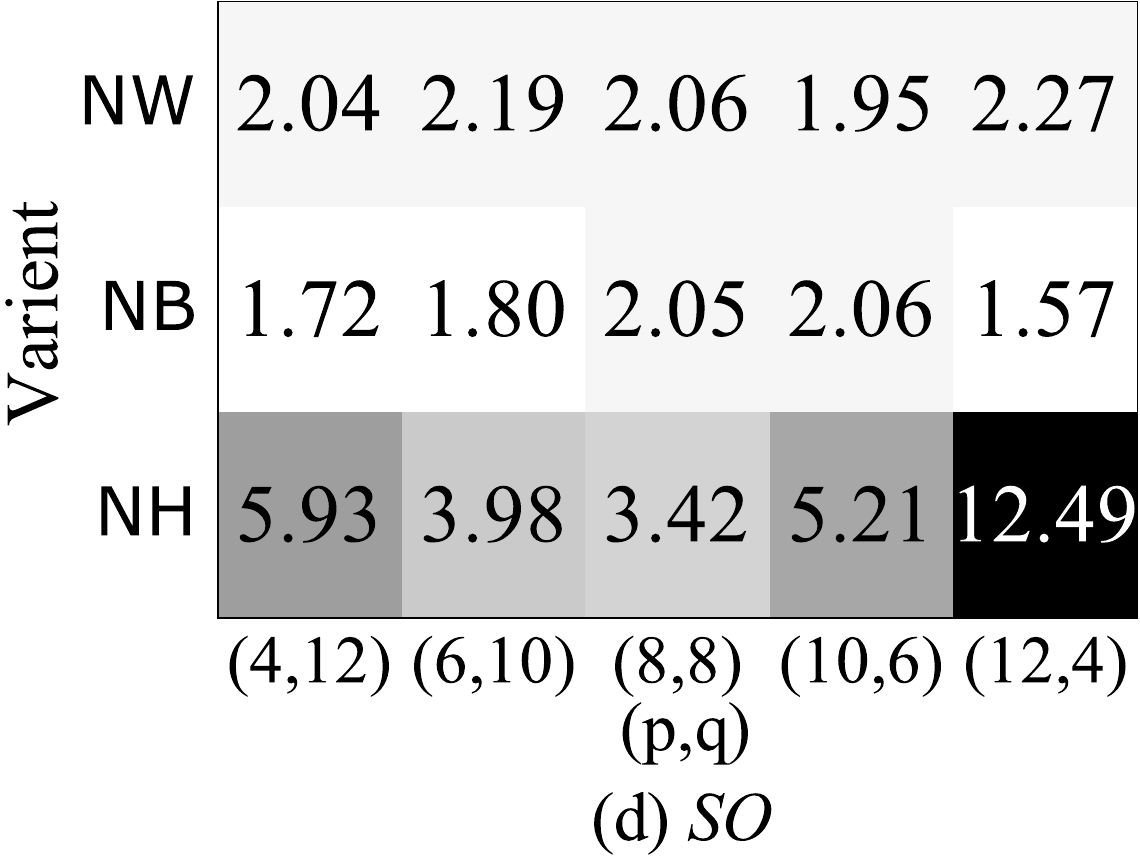}}
    \hspace{-2.3mm}
    \subfigure{
    \includegraphics[width=0.191\textwidth]{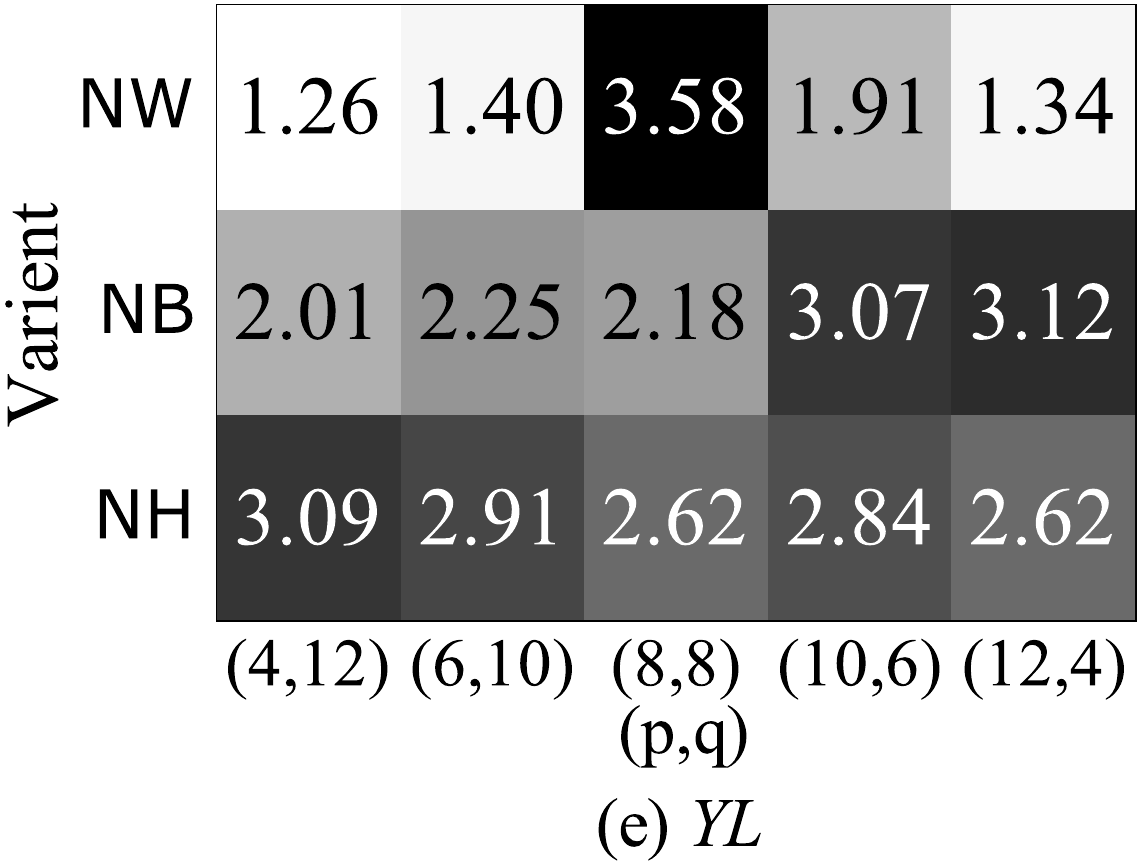}}
    % \vspace{-3mm}
    
    % \hspace{-2mm}
    % \subfigure[\textit{ID}]{
    % \includegraphics[width=0.208\textwidth]{figures/imdbablationh.pdf}}  
    % \hspace{-2.3mm}
    % \subfigure[\textit{LF}]{
    % \includegraphics[width=0.191\textwidth]{figures/lastfmablationh.pdf}}  
    % \hspace{-2.3mm}
    % \subfigure[\textit{FR}]{
    % \includegraphics[width=0.191\textwidth]{figures/editfrablationh.pdf}}  
    % \hspace{-2.3mm}
    % \subfigure[\textit{S1}]{
    % \includegraphics[width=0.191\textwidth]{figures/githubnew2ablationh.pdf}}
    % \hspace{-2.3mm}
    % \subfigure[\textit{S2}]{
    % \includegraphics[width=0.191\textwidth]{figures/generate6ablationh.pdf}}
    
    % \hspace{-5mm}
    \vspace{-4mm}
    \caption{{Average speedup of \textsf{GBC} compared to its variants (\textsf{NH}: w/o hybrid exploration; \textsf{NB}: w/o HTB \& \textsf{Border}); \textsf{NW}: w/o workload balancing).}}
    % \vspace{-5mm}
    \label{ablationfigure}
    \vspace{-6mm}
\end{figure*}
% \vspace{-0.8cm}

% \vspace{-2mm}
\subsection{{Effect of Individual Optimization}}
\label{ablationstudy}
% \vspace{-2mm}

{
In this section, we evaluate the effect of the proposed optimizations by disabling specific modules of \textsf{GBC}, yielding three comparative variants: (1) \textsf{NH} omits the hybrid DFS-BFS exploration, (2) \textsf{NB} discards HTB and \textsf{Border}, and (3)\textsf{NW} excludes the workload balancing.
Due to space constraints and similar experimental outcomes, we only present the results on five datasets (\textit{YT}, \textit{BC}, \textit{GH}, \textit{SO}, \textit{YL}), as depicted in Figure~\ref{ablationfigure}.}

{\textbf{Effect of Hybrid DFS-BFS Exploration.}
The hybrid DFS-BFS exploration plays a pivotal role in optimizing \textsf{GBC}, whose absence leads to a 3.7$\times$ increase in average runtime. 
The integration of hybrid exploration allows us to fully harness the potential of GPU threads by concurrently applying intersections for multiple vertices during the BFS phase, resulting in more efficient thread utilization and thereby reducing the time required for intersections. This notably enhances \textsf{GBC}'s performance on datasets with a low average degree or when reaching the bottom of the search trees with fewer candidates. 
For example, hybrid exploration leads to 4.2$\times$ and 6.2$\times$ reduction in runtimes for \textsf{GBC} on \textit{YT} and \textit{SO} (with average degrees less than 4), respectively.
We further compare the performance between DFS and DFS-BFS ({Appendix-B in~\cite{fullversion2024}}). On average, DFS-BFS incurs 1.3$\times$ more memory overhead, remaining well below the GPU memory capacity. However, DFS-BFS demonstrates superior performance, being on average 2.2$\times$ faster than DFS, attributed to the effective utilization of parallel threads offered by GPU.}

% \vspace{-0.25mm}
{\textbf{Effect of HTB and \textsf{Border}.}
HTB and \textsf{Border} consistently enhance performance across all datasets, resulting in an average speedup of 2.2$\times$ over using CSR. 
This enhancement lies in that HTB and \textsf{Border} effectively compress adjacency lists, which reduces both element comparisons and memory accesses during binary search. This effect is particularly pronounced for datasets characterized by long adjacency lists, as a large proportion of elements can be compressed. For instance, \textsf{GBC} with HTB and \textsf{Border} achieves a speedup of more than 4$\times$ on \textit{GH} when $p=4(6)$ and $q=12(10)$.}

% \vspace{-0.5mm}
{We further explore \textsf{Border}'s efficiency and compare it with a leading reordering method for unipartite graphs, i.e., \textsf{Gorder}~\cite{WeiYLL16}. We apply \textsf{Border} and \textsf{Gorder} to the graphs respectively and execute \textsf{GBC} with HTB on the reordered ones. Table~\ref{vertexreordertable} reports the experimental results.
First, compared to no reordering, \textsf{Gorder} achieves an average speedup of 2.4$\times$, whereas \textsf{Border} represents a notably higher average speedup of 3.1$\times$, affirming the effectiveness of vertex reordering. 
Second, \textsf{Border} exhibits superior performance gain over \textsf{Gorder} across all datasets, averaging 37.0\% faster than \textsf{Gorder}. This improvement is attributed to \textsf{Border}'s specialization in optimizing the density of HTB and its tailored design for bipartite graphs. In contrast, \textsf{Gorder} prioritizes the hit rate of the CPU cache and reorders all vertices of the entire graph, potentially leading to inefficiencies in our specific problem domain. }

\begin{table}
% \vspace{-1mm}
    \begin{center}   
        \caption{Time costs (sec.) of \textsf{GBC} on (un)reordered graphs ($p=q=8$).}  
        
        \vspace{-2mm}
        \label{vertexreordertable} 
        \begin{tabular}{|c||c|c|c|}
            \hline   \diagbox[width=2cm]{\textbf{Datasets}}{\textbf{Methods}} &  No Reorder & \textsf{Gorder} & \textsf{Border} \\ \hline
            \hline   \textit{YT} & 1.90 & 1.15 & \textbf{0.87} \\
            \hline   \textit{BC} & 14.00 & 3.29 & \textbf{2.99} \\
            \hline   \textit{GH} & 3.64 & 1.62 & \textbf{1.45} \\
            \hline   \textit{SO} & 0.15 & 0.04 & \textbf{0.03} \\
            \hline   \textit{YL} & 10.30 & 3.61 & \textbf{2.25} \\
            \hline   \textit{ID} & 5.64 & 2.19 & \textbf{1.61} \\
            \hline   \textit{LF} & INF & 17488.60 & \textbf{7753.15} \\
            \hline   \textit{FR} & 1251.01 & 946.36 & \textbf{669.98} \\
            \hline   \textit{S1} & 4.01 & 2.47 & \textbf{2.29} \\
            \hline   \textit{S2} & 4.65 & 2.59 & \textbf{2.36} \\
            \hline
        \end{tabular}
    \end{center}
    \vspace{-3mm}
\end{table}

\begin{table}[tbp]
    % \vspace{-2mm}
    \begin{center}   
        \caption{Time costs (sec.) of \textsf{GBC} varying load balancing strategies.} 
        \vspace{-2mm}
        \label{workstealtable} 
        \begin{tabular}{|c||c|c|c|c|c|c|}
           \hline 
            \diagbox[width=2.5cm]{\textbf{Methods}}{\textbf{Datasets}} & \textit{SO} & \textit{S2} & \textit{BC} & \textit{LF} & \textit{FR}\\ 
            \hline\hline
            No Balance & 0.14 & 5.38 & 8.02 & INF & 3941.55 \\
            \hline
            Pre-runtime Only & \textbf{0.03} & 2.52 & 3.19 & 9071.55 & 804.31 \\
            \hline
            Runtime Only & 0.05 & 3.86 & 7.83 & INF & 2967.25 \\
            \hline
            Joint & 0.07 & \textbf{2.36} & \textbf{2.99} & \textbf{7753.15} & \textbf{669.98} \\
            \hline
        \end{tabular}
    \end{center}
    % \vspace{-1mm}
\end{table}

% \vspace{-1mm}
{
\textbf{Effect of Workload Balancing.}
The joint workload balancing yields a substantial optimization effect with an average speedup of 2.2$\times$.
Table~\ref{workstealtable} further elaborates on the impact of different load balancing strategies.
First, both pre-runtime and runtime strategies accelerate the algorithm, demonstrating their effectiveness in balancing workloads.
Second, the pre-runtime strategy consistently outperforms the runtime strategy primarily due to its utilization of a fine-grained approach. This involves distributing vertices from the second layer to thread blocks, contrasting with the coarse-grained method of the runtime strategy, which dynamically reassigns unprocessed root vertices from occupied to idle blocks. Furthermore, the runtime strategy necessitates frequent global memory access for tasks such as block location and \textit{GCL} rewriting.
Lastly, the joint strategy demonstrates optimal performance in most scenarios, particularly under heavy workloads, underscoring the complementary nature of both strategies.}

{In conclusion, the DFS-BFS exploration, serving as the backbone, exhibits the highest average acceleration (3.7$\times$). 
Although the collective impact of HTB and \textsf{Border} demonstrate a comparable speedup of 2.2$\times$ to the workload balancing, their enhancement in performance gradually amplifies with larger datasets. For instance, on datasets \textit{GH}, \textit{SO}, and \textit{YL}, their average speedup reaches 2.5$\times$ (up to 3.1$\times$), whereas the average speedup of load balancing is 2.0$\times$ (up to 2.1$\times$). This progression is attributed to the increasing workload of intersection computation emerging as the performance bottleneck.}
\subsection{Evaluating Graph Partition}
\label{sec:graphpartitionevaluation}
% \vspace{-2mm}

Finally, we evaluate the effectiveness of \textsf{BCPar} through a comparative analysis with the widely-used graph partitioning algorithm, i.e., \textsf{METIS}~\cite{Karypis1998}, on the large dataset \textit{OR}.  
Given that \textsf{METIS} is originally designed for unipartite graphs, we construct an auxiliary graph to serve as input for \textsf{METIS}. This auxiliary graph contains the vertices within the selected layer, with pairwise connections established if and only if two vertices are mutual 2-hop neighbors.
{Figure~\ref{partitionresults} reports the average throughput on the partitioned graphs, i.e., the number of bicliques traversed per second.}
% For graphs generated by \textsf{METIS}, we measure the throughputs of counting intra-partition and inter-partition bicliques respectively. 
% But for graphs produced by \textsf{BCPar}, we just need to evaluate the performance of counting intra-partition bicliques. 
% This is because \textsf{BCPar} does not produce bicliques that span more than two partitions. 
 
% Figure~\ref{partitionresults}(a) reports the throughput on graphs partitioned by \textsf{METIS}. It's evident that the throughput of counting intra-partition bicliques is higher than counting inter-partition bicliques. This is because the GPU needs to repeatedly load partitioned graphs through PCIe to count bicliques spanning more than two partitions, which significantly decreases the whole throughput. 
{As shown in Figure~\ref{partitionresults}(a), the throughput achieved on the graph partitioned by \textsf{BCPar} consistently surpasses its \textsf{METIS} counterpart.
The superiority of \textsf{BCPar} arises from its ability to partition a graph into mutually autonomous subgraphs, thereby confining the search within the current subgraph, avoiding the need for introducing communication overheads.} Moreover, \textsf{BCPar} requires the vertices in the same partition to share as many common neighbors as possible, which facilitates load balancing. 
In contrast, the subgraphs partitioned by \textsf{METIS} introduce dependencies among them, necessitating frequent data transfers via low-bandwidth PCIe to enumerate bicliques across partitions, thereby diminishing the overall throughput. 
Figure~\ref{partitionresults}(b) illustrates the detailed throughputs concerning bicliques found in the same partition (referred to intra-partition) and those spanning across partitions (referred to inter-partition). {Undoubtedly, the throughputs for enumerating inter-partition bicliques are markedly inferior to those for enumerating intra-partition bicliques, constituting the bottleneck for METIS. Conversely, \textsf{BCPar} mitigates this bottleneck by incorporating the structural properties of bicliques.}

% We average the throughput of inter-partition and intra-partition bicliques on graphs partitioned by \textsf{METIS} as its total throughput. 
% And compare the throughput on graphs produced by \textsf{METIS} and \textsf{BCPar} respectively.  
% It is evident that the throughput on the graphs partitioned by \textsf{BCPar} is higher than those partitioned by \textsf{METIS}. This is because on the graphs partitioned by \textsf{METIS} the GPU needs to repeatedly load partitioned graphs through PCIe to count bicliques spanning more than two partitions, which significantly decreases the whole throughput. 
% In contrast, \textsf{BCPar} will not produce bicliques spanning two or more partitions, which effectively avoids the data load cost. 
% {We show the detailed throughputs of \textsf{METIS} in Figure~\ref{partitionresults}(b). The throughputs of counting inter-partition bicliques are significantly lower than counting intra-partition bicliques, which causes the low throughputs of \textsf{GBC} on graphs partitioned by \textsf{METIS}. }

\begin{figure}[!tbp]
    \centering
    % \hspace{-2mm}
    
    % \subfigure[Comparison of throughputs. ]{
    % \includegraphics[width=0.225\textwidth]{figures/partitioncompare.pdf}}
    % \vspace{-1mm}
    % \subfigure[On graphs generated by \textsf{METIS}.]{
    % \includegraphics[width=0.225\textwidth]{figures/metisthroughput.pdf}}\
    % \vspace{-3mm}
    % \hspace{-2.3mm}
    \hspace{-2mm}
    \includegraphics[width=0.49\textwidth]{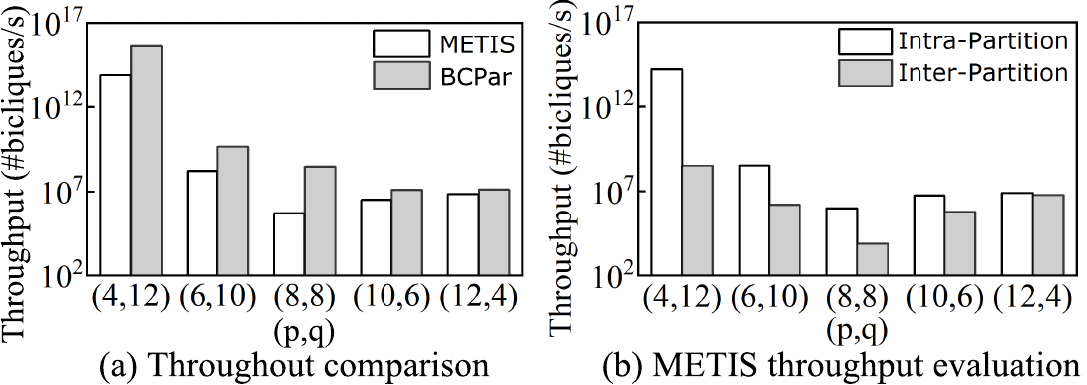}
    \vspace{-6mm}
    \caption{Throughput evaluation vs. different methods.}
    
    \label{partitionresults}
    % \vspace{1mm}
\end{figure}

\section{Related Work}
\label{sec:relatedwork}
% \vspace{-1.5mm}

{\textbf{CPU-based Motif Counting on Bipartite Graphs.} 
Wang et al.~\cite{wang2014rectangle} initiate the exploration of butterfly counting and introduce wedge-based counting algorithms. Sanei-Mehri et al.~\cite{sanei2018butterfly} develop a random algorithm to estimate butterflies, while Wang et al.~\cite{sanei2018butterfly} formulate a vertex-priority strategy for butterfly listing. Recent studies extend butterfly counting to streaming graphs~\cite{sanei2019fleet, sheshbolouki2022sgrapp}, uncertain graphs~\cite{sanei2019fleet, sheshbolouki2022sgrapp}, and temporal graphs~\cite{cai2023efficient}.
Over the years, several algorithms have been proposed for identifying maximal bicliques~\cite{ li2007maximal,chen2022efficient}.
% Identifying maximal bicliques, another thoroughly investigated problem on bipartite graphs, several algorithms have been proposed over the years~\cite{ li2007maximal,chen2022efficient}. 
Recently, maximal bicliques in signed bipartite graphs~\cite{sun2022maximal}, uncertain graphs~\cite{wang2023efficient}, and those with unique properties like fairness-aware maximal cliques~\cite{yin2023fairness} have been explored.
Yang et al.~\cite{yang2021p} pioneers the $(p, q)$-biclique enumeration algorithm by introducing a recursive backtracking algorithm.}

{\textbf{GPU-accelerated motif counting on bipartite and unipartite graphs.} 
Xu et al.~\cite{xu2022efficient} pioneer GPU acceleration for butterfly counting, introducing a lock-free strategy to reduce synchronization and an adaptive strategy for workload balance.
In unipartite graph motifs, triangle counting on GPUs~\cite{Polak16} has seen various optimized algorithms, including workload estimation~\cite{green2018logarithmic}, bitmap-based intersection~\cite{BissonF17}, and wedge-oriented methods~\cite{hu2019triangle}, alongside  preprocessing techniques~\cite{Hu0L21}. 
% Maximal clique enumeration on GPUs has been extensively studied~\cite{almasri2022parallelizing,wei2021accelerating}.
% Almasri et al.~\cite{almasri2022parallel} achieves GPU-accelerated $k$-clique counting with two parallel acceleration methods based on graph orientation and pivoting. 
In addition, there exists a wealth of literature on GPU acceleration for maximal clique enumeration~\cite{almasri2022parallelizing,wei2021accelerating}, $k$-clique counting~\cite{almasri2022parallel}, $k$-core decomposition~\cite{tripathy2018scalable, mehrafsa2020vectorising, ahmad2023accelerating} and $k$-truss decomposition~\cite{almasri2019update, diab2020ktrussexplorer, che2020accelerating}. 
To the best of our knowledge, accelerating $(p,q)$-biclique counting on GPUs has not been thoroughly investigated in the existing literature. }
% \vspace{-2mm}
\section{Conclusion}
\label{sec:conclusion}
% \vspace{-1.5mm}

We introduce \textsf{GBC}, a novel GPU-based approach for counting $(p,q)$-bicliques on large bipartite graphs.  
{\textsf{GBC} enhances parallelism by adaptively consolidating tasks across multiple vertices during DFS.}
To facilitate efficient set intersection, we propose HTB, a novel data structure that reduces redundant comparisons and memory transactions. We further devise \textsf{Border} to compress HTB by reordering vertices. 
For scalability, we develop \textsf{BCPar} for handling large bipartite graphs beyond the GPU memory. 
Experimental results consistently demonstrate that \textsf{GBC} significantly surpasses existing algorithms, affirming its effectiveness, efficiency, and scalability.

% \vspace{-1mm}
\section{Acknowledgements}
% \vspace{-1mm}

{This work was supported in part by the NSFC under Grants No. (62025206, U23A20296, 62102351)}, Zhejiang Province's ``Lingyan'' R\&D Project under Grant No. 2024C01259, and Yongjiang Talent Introduction Programme (2022A-237-G). Yunjun Gao is the corresponding author of the work.

% \balance

\newpage
\balance
\clearpage

\bibliographystyle{abbrv}
\bibliography{ref}
% \balance

\newpage
\nobalance
% \clearpage
\appendix
\subsection{Breakdown of Running Time}
\label{fig:runtime_breakdown}

We provide a comprehensive breakdown of the time spent on various components, including HTB transformation, \textsf{Border} execution, and biclique search. Table~\ref{tab:time_component} presents the experimental outcomes.
Notably, the time required for HTB transformations is minimal, typically ranging from tens to hundreds of milliseconds, {generally falling below one percent, even one ten-thousandth of the counting time, and proportional to the number of vertices.}
With the additional time cost for \textsf{Border} reordering, which typically ranges from 0.18s to 62.17s, we observe that the overall runtime decreases by up to 4.60$\times$, highlighting the significant benefit of reordering in biclique searching.
Furthermore, it's important to note that \textsf{Border} can be reused for different $(p,q)$ parameters. Hence the amortized runtime of \textsf{Border} becomes practically negligible.

% \vspace{-4mm}
\begin{table}[htbp]
    \begin{center}   
        \caption{Time costs (sec.) of each component in \textsf{GBC}.}  
        \setlength{\tabcolsep}{2.3pt}
        \vspace{-2mm}
        \label{tab:time_component} 
        \begin{tabular}{|c||c|c|c|}
            \hline   \diagbox[width=2.3cm]{\textbf{Datasets}}{\textbf{Components}} &  HTB transformation &  Reorder & Counting \\ \hline
            \hline   \textit{YT} & 0.001 &0.18  &0.87    \\
            \hline   \textit{BC} & 0.008 &1.47  &2.99     \\
            \hline   \textit{GH} & 0.001 &0.46  &1.45     \\
            \hline   \textit{SO} & 0.007 &0.88  &0.03     \\
            \hline   \textit{YL} & 0.013 &2.96  &2.25     \\
            \hline   \textit{ID} & 0.031 &8.87  &1.61     \\
            \hline   \textit{LF} & 0.172 &62.17  &7753.15\\
            \hline   \textit{FR} & 0.007 &43.74  &669.98  \\
            \hline   \textit{S1} & $6\times10^{-4}$ &0.45  &2.29     \\
            \hline   \textit{S2} & $6\times10^{-4}$ &0.45  &2.36     \\
            \hline
        \end{tabular}
    \end{center}
    \vspace{-3mm}
\end{table}

\subsection{Comparison Between DFS and DFS-BFS}
We conduct a comparison of the memory consumption between the DFS-BFS and DFS methods, as depicted in Figure~\ref{fig:dfs_hybrid}.
On average, DFS-BFS incurs 1.3$\times$ more memory overhead, remaining well below the GPU memory capacity (hundreds of megabytes compared to 24 gigabytes of GPU memory).
However, DFS-BFS demonstrates superior performance, being on average 2.2$\times$ faster than DFS, attributed to the effective utilization of parallel threads offered by GPU.

\vspace{-2mm}
\begin{figure}[htbp]
    \centering
    \includegraphics[width=0.3\textwidth]{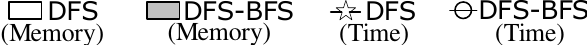}\\
    \hspace{-2mm}
    \includegraphics[width=0.27\textwidth]{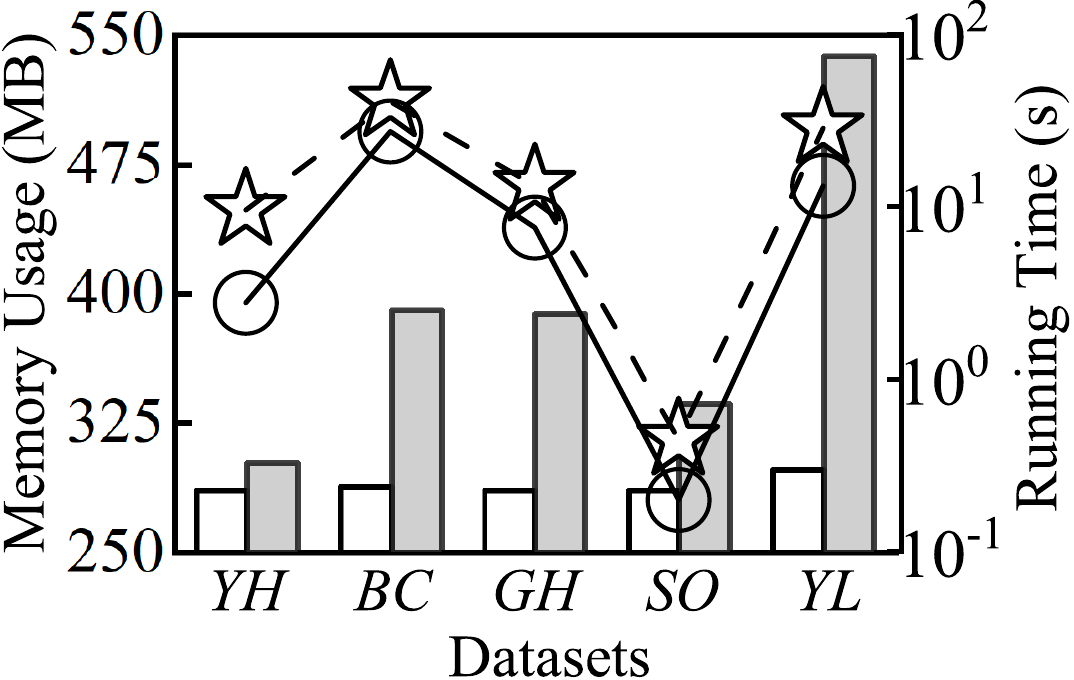}
    \vspace{-2mm}
    \caption{Performance between DFS and DFS-BFS}
    \label{fig:dfs_hybrid}
    % \vspace{-2mm}
\end{figure}
% \nobalance

\end{document}